\newtheorem{theorem}{Theorem}
\newtheorem{lemma}{Lemma}
\newtheorem{proposition}{Proposition}
\newcommand{\etal}{et~al.}
\newcommand{\opt}{\mathrm{opt}}
\newcommand{\val}{\mathrm{val}}
\newcommand{\la}{\mathrm{la}}
\newcommand{\mis}[1]{\textup{Max-IS-$#1$}}
\newcommand{\mvc}[1]{\textup{Min-VC-$#1$}}
\newcommand{\sat}[2]{\textup{E$#1$-Occ-Max-E$#2$-SAT}}
\newcommand{\msr}[1]{\textup{MSR-$#1$}}
\newcommand{\cmsr}[1]{\textup{CMSR-$#1$}}
\newcommand{\gapmsr}[2]{\textup{$#1$-gap-MSR-$#2$}}
\newcommand{\gapcmsr}[2]{\textup{$#1$-gap-CMSR-$#2$}}
\newcommand{\dm}[1]{\textup{$#1$-Dimensional-Matching}}
\newcommand{\C}{{\mathcal{C}}}
\newcommand{\tl}[1]{\stackrel {#1}\blacktriangleleft}
\newcommand{\tr}[1]{\stackrel {#1}\blacktriangleright}
\newcommand{\fl}[1]{\stackrel {#1}\vartriangleleft}
\newcommand{\fr}[1]{\stackrel {#1}\vartriangleright}
\newcommand{\vl}[1]{\stackrel {#1}<}
\newcommand{\vr}[1]{\stackrel {#1}>}
\newcommand{\xl}[1]{\stackrel {#1}\sqsubset}
\newcommand{\xr}[1]{\stackrel {#1}\sqsupset}
\newcommand{\yl}[1]{\stackrel {#1}\Subset}
\newcommand{\yr}[1]{\stackrel {#1}\Supset}
\newcommand{\zl}[1]{\stackrel {#1}\subset}
\newcommand{\zr}[1]{\stackrel {#1}\supset}
\newcommand{\xs}{\;\;}
\begin{document}

\title{\bf Inapproximability of maximal strip recovery\thanks{%
This research was supported in part by NSF grant DBI-0743670.
A preliminary version of this paper appeared in two parts~\cite{Ji09,Ji10}
in the Proceedings of the 20th International Symposium on Algorithms and Computation (ISAAC 2009)
and the Proceedings of the 4th International Frontiers of Algorithmics Workshop (FAW 2010).}}

\author{Minghui Jiang
\medskip\\
Department of Computer Science, Utah State University, Logan, UT 84322, USA\\
\texttt{mjiang@cc.usu.edu}}

\maketitle

\begin{abstract}
In comparative genomic, the first step of sequence analysis
is usually to decompose two or more genomes into syntenic blocks
that are segments of homologous chromosomes.
For the reliable recovery of syntenic blocks,
noise and ambiguities in the genomic maps need to be removed first.
Maximal Strip Recovery (MSR) is an optimization problem
proposed by Zheng, Zhu, and Sankoff
for reliably recovering syntenic blocks from genomic maps
in the midst of noise and ambiguities.
Given $d$ genomic maps as sequences of gene markers,
the objective of \msr{d} is to find $d$ subsequences,
one subsequence of each genomic map,
such that the total length of syntenic blocks
in these subsequences is maximized.
For any constant $d \ge 2$,
a polynomial-time $2d$-approximation for \msr{d} was previously known.
In this paper, we show that for any $d \ge 2$,
\msr{d} is APX-hard,
even for the most basic version of the problem
in which all gene markers are distinct and appear in positive orientation
in each genomic map.
Moreover, we provide the first explicit lower bounds
on approximating \msr{d} for all $d \ge 2$.
In particular,
we show that \msr{d} is NP-hard to approximate within $\Omega(d/\log d)$.
From the other direction,
we show that the previous $2d$-approximation for \msr{d}
can be optimized into a polynomial-time algorithm
even if $d$ is not a constant but is part of the input.
We then extend our inapproximability results to several related problems
including \cmsr{d}, \gapmsr{\delta}{d}, and \gapcmsr{\delta}{d}.
\end{abstract}

\textbf{Keywords:}
computational complexity, bioinformatics, sequence analysis,
genome rearrangement.


\section{Introduction}

In comparative genomic, the first step of sequence analysis
is usually to decompose two or more genomes into syntenic blocks
that are segments of homologous chromosomes.
For the reliable recovery of syntenic blocks,
noise and ambiguities in the genomic maps need to be removed first.
A genomic map is a sequence of gene markers.
A gene marker appears in a genomic map
in either positive or negative orientation.
Given $d$ genomic maps,
\emph{Maximal Strip Recovery} (\msr{d}) is the problem of
finding $d$ subsequences, one subsequence of each genomic map,
such that the total length of strips of these subsequences
is maximized~\cite{ZZS07,CFJZ09}.
Here a \emph{strip} is a maximal string of at least two markers
such that either the string itself or its signed reversal appears
contiguously as a substring in each of the $d$ subsequences in the solution.
Without loss of generality,
we can assume that all markers appear in positive orientation
in the first genomic map.

For example, the two genomic maps
(the markers in negative orientation are underlined)
\begin{gather*}
1 \quad 2 \quad 3 \quad 4 \quad 5 \quad 6 \quad 7 \quad 8 \quad 9 \quad 10 \quad 11 \quad 12
\\
\underline{8} \quad \underline{5} \quad \underline{7} \quad \underline{6} \quad 4 \quad 1 \quad 3 \quad 2 \quad \underline{12} \quad \underline{11} \quad \underline{10} \quad 9
\end{gather*}
have two subsequences
\begin{gather*}
1 \quad 3
\quad\quad\quad
6 \quad 7 \quad 8
\quad\quad\quad
10 \quad 11 \quad 12
\\
\underline{8} \quad \underline{7} \quad \underline{6}
\quad\quad\quad
1 \quad 3
\quad\quad\quad
\underline{12} \quad \underline{11} \quad \underline{10}
\end{gather*}
of the maximum total strip length $8$.
The strip $\langle 1, 3 \rangle$ is positive and forward in both subsequences;
the other two strips $\langle 6, 7, 8 \rangle$ and $\langle 10, 11, 12 \rangle$
are positive and forward in the first subsequence,
but are negative and backward in the second subsequence.
Intuitively,
the strips are syntenic blocks,
and the deleted markers not in the strips
are noise and ambiguities in the genomic maps.

The problem \msr{2} was introduced by Zheng, Zhu, and Sankoff~\cite{ZZS07},
and was later generalized to \msr{d} for any $d \ge 2$
by Chen, Fu, Jiang, and Zhu~\cite{CFJZ09}.
For \msr{2},
Zheng \etal~\cite{ZZS07} presented
a potentially exponential-time heuristic
that solves a subproblem of Maximum-Weight Clique.
For \msr{d},
Chen \etal~\cite{CFJZ09} presented a $2d$-approximation
based on Bar-Yehuda \etal's fractional local-ratio algorithm
for Maximum-Weight Independent Set in $d$-interval graphs~\cite{BHN+06};
the running time of this $2d$-approximation algorithm is polynomial
if $d$ is a constant.

On the complexity side,
Chen \etal~\cite{CFJZ09} showed that several close variants of
the problem \msr{d} are intractable.
In particular, they showed that (i) \msr{2} is NP-complete
if duplicate markers are allowed in each genomic map,
and that (ii) \msr{3} is NP-complete
even if the markers in each genomic map are distinct.
The complexity of \msr{2} with no duplicates, however,
was left as an open problem.

In the biological context,
a genomic map may contain duplicate markers as
a paralogy set~\cite[p.~516]{ZZS07}, but such maps are relatively rare.
Thus \msr{2} without duplicates is the most useful version of \msr{d}
in practice.
Theoretically,
\msr{2} without duplicates is the most basic and hence
the most interesting version of \msr{d}.
Also, the previous NP-hardness proofs of
both (i) \msr{2} with duplicates and (ii) \msr{3} without duplicates~\cite{CFJZ09}
rely on the fact that a marker may appear in a genomic map
in either positive or negative orientation.
A natural question is
whether there is any version of \msr{d} that remains NP-hard
even if all markers in the genomic maps are in positive orientation.

We give a precise formulation of \emph{the most basic version}
of the problem \msr{d} as follows:
\begin{description}
\item{INSTANCE:}
Given $d$ sequences $G_i$, $1 \le i \le d$,
where each sequence is a permutation of $\langle 1,\ldots,n\rangle$.
\item{QUESTION:}
Find a subsequence $G'_i$ of each sequence $G_i$, $1 \le i \le d$,
and find a set of strips $S_j$,
where each strip is a sequence
of length at least two over the alphabet $\{1,\ldots n\}$,
such that each subsequence $G'_i$ is the concatenation of
the strips $S_j$ in some order,
and the total length of the strips $S_j$ is maximized.
\end{description}

The main result of this paper is the following theorem
that settles the computational complexity of the most basic version of
Maximal Strip Recovery,
and moreover provides the first explicit lower bounds
on approximating \msr{d} for all $d \ge 2$:

\begin{theorem}\label{thm:msrd}
\msr{d} for any $d \ge 2$ is APX-hard.
Moreover,
\msr{2}, \msr{3}, \msr{4}, and \msr{d}
are NP-hard to approximate
within
$1.000431$,
$1.002114$,
$1.010661$,
and
$\Omega(d/\log d)$,
respectively,
even if
all markers are distinct and appear in positive orientation in each genomic map.
\end{theorem}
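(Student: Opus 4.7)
The plan is to establish Theorem~\ref{thm:msrd} by gap-preserving reductions from various APX-hard source problems, tailored to each target. For the explicit numerical bounds at $d \in \{2,3,4\}$, I would reduce from bounded-occurrence Max-SAT problems of the form \sat{k'}{k}, whose known explicit inapproximability thresholds match the three stated ratios; different $(k',k)$ fit different $d$ because a larger $d$ leaves more room inside each gadget to encode either clauses of higher arity or variables of higher occurrence. For the asymptotic bound, I would reduce from Maximum \dm{d}, which Hazan, Safra, and Schwartz proved to be NP-hard to approximate within $\Omega(d/\log d)$, matching the target factor exactly.

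The common template is as follows. Starting from a source instance, build $d$ permutations $G_1,\dots,G_d$ of a shared marker set, all in positive orientation and all distinct. Each variable (or element of the matching ground set) contributes a short block of fresh markers whose relative placement in the $G_i$ admits two ``canonical'' arrangements representing a truth value, or the decision to use the element in a particular tuple. Each clause (or hyperedge) contributes a small group of markers which, by a carefully chosen relative position across all $d$ maps, can be concatenated into a strip of length at least two precisely when the underlying combinatorial object is satisfied or selected in a consistent way. Balancing a fixed per-variable contribution against an incremental per-clause contribution yields an L-reduction with small constants, so the absolute inapproximability gap of the source transfers up to a constant factor, giving the stated ratios. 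APX-hardness for every $d\ge 2$ then follows either from the \msr{2} case, by duplicating a genomic map $d-2$ extra times as a trivial padding, or directly from the explicit constant bound for the smallest $d \in \{2,3,4\}$ already established.

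The principal technical obstacle will be to respect the restriction that all markers are distinct and appear in positive orientation. The previously published NP-hardness reductions for MSR variants exploit either duplicate markers (for \msr{2}) or negative orientations (for \msr{3}), and neither device is available here; all combinatorial conflict must instead be encoded purely through the interleaving order of disjoint positive markers across the $d$ maps. The bulk of the technical work will therefore be twofold: first, designing variable and clause gadgets whose intended strips are forced by their global placement; and second, establishing \emph{soundness}, namely a normalization lemma showing that any MSR-$d$ solution can be rewritten, without loss in total strip length, into a ``canonical'' form in which no strip crosses gadget boundaries. Once such a canonical form is in hand, the correspondence with the source optimum becomes direct and the claimed inapproximability ratios follow from the L-reduction parameters together with the known gap of the source problem.
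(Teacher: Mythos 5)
Your overall template --- L-reductions into permutations of distinct, positively oriented markers, a normalization lemma forcing solutions into canonical form, \dm{d} as the source for the asymptotic bound, and padding with redundant maps to lift APX-hardness to every $d \ge 2$ --- matches the architecture of the paper's proof. But there are two genuine gaps. First, the proposal is a plan rather than a proof: the gadgets and the canonical-form lemma, which you correctly identify as ``the bulk of the technical work,'' are never constructed, and for \msr{2} that normalization argument is by far the hardest part of the paper (an eight-stage cut--delete--insert and shift procedure over dummy, clause, variable, literal, and true/false markers). Second, and more concretely, your choice of source problems for $d=3,4$ cannot yield the stated constants. The bounds $1.010661$ and $1.002114$ do not come from any \sat{k'}{k} threshold: $1.010661$ is exactly Chleb\'ik and Chleb\'ikov\'a's lower bound for \mis{3}, and the paper obtains it for \msr{4} by an L-reduction from \mis{3} with $\alpha\beta=1$ (and $1.002114$ for \msr{3} from the same source with $\alpha\beta=5$). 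A SAT-based reduction cannot reach $\alpha\beta=1$ here, because a random assignment already satisfies a constant fraction of the clauses, so $\opt(x)=\Omega(m)$ forces $\alpha$ to absorb the $\Theta(m)$ size of the construction; the paper's own \sat{3}{2} reduction for \msr{2} only attains $\alpha\beta=5+\epsilon$, which is precisely why the \msr{2} constant $1.000431$ is so much weaker than $\frac{464}{463}$.

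The key device you are missing is how to encode an arbitrary bounded-degree graph using distinct positive markers in few maps. The paper partitions the edges of a degree-$3$ graph into two linear forests, encodes the paths of each forest as an interleaving pattern of vertex-marker pairs in one genomic map, and adds two further maps listing all pairs in ascending and in descending index order; those two maps force every strip to be a single pair of markers, so independent sets correspond exactly to sets of pairwise non-intersecting marker pairs, giving $l^*=2k^*$ and hence $\alpha\beta=1$. (For \msr{3} the two order-enforcing maps are replaced by one map plus interleaved dummy pairs, which is what degrades $\alpha\beta$ from $1$ to $5$.) Without this or an equivalent mechanism, ``carefully chosen relative position across all $d$ maps'' remains an unfulfilled promise, and the claimed constants for $d=3,4$ do not follow from the sources you propose.
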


Recall that for any constant $d \ge 2$,
\msr{d} admits a polynomial-time $2d$-approximation algorithm~\cite{CFJZ09}.
Thus \msr{d} for any constant $d \ge 2$ is APX-complete.
Our following theorem gives a polynomial-time $2d$-approximation algorithm
for \msr{d} even if the number $d$ of genomic maps is not a constant
but is part of the input:

\begin{theorem}\label{thm:2d}
For any $d \ge 2$, there is a polynomial-time $2d$-approximation algorithm
for \msr{d} if all markers are distinct in each genomic map.
This holds even if $d$ is not a constant but is part of the input.
\end{theorem}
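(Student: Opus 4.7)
The plan is to follow the general strategy of Chen \etal~\cite{CFJZ09}: reduce \msr{d} to Maximum-Weight Independent Set in $d$-interval graphs, and then apply the local-ratio $2d$-approximation of Bar-Yehuda \etal~\cite{BHN+06} on the resulting instance. The only refinement required over Chen \etal's analysis is to make every step polynomial in $n$ and $d$ jointly, rather than polynomial in $n$ for each fixed constant $d$.

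First I would enumerate a polynomial family of candidate strips. The distinctness of markers within each map gives each candidate a short combinatorial fingerprint --- essentially a pair of endpoint markers in $G_1$ plus a bounded amount of orientation information --- so only $O(\mathrm{poly}(n,d))$ candidates need be considered, and each can be produced and checked in $O(\mathrm{poly}(n,d))$ time by scanning the $d$ maps. Each candidate then becomes a vertex in the conflict graph with a $d$-interval given by its per-map position range and a weight equal to its length; two candidates are adjacent iff their $d$-intervals intersect on at least one of the $d$ tracks. A correctness argument shows that the maximum-weight independent set in this graph equals the optimum of the original \msr{d} instance.

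Second, I would invoke Bar-Yehuda \etal's fractional local-ratio algorithm on this graph. The essential observation is that their algorithm, while stated in~\cite{BHN+06} for constant $t$, in fact runs in time polynomial in both the number of $t$-intervals and the parameter $t$: the quantity $t$ enters their analysis only as a linear factor in the approximation ratio and in the local-ratio bookkeeping, never in the exponent. Composing with the polynomial-time reduction then yields the claimed polynomial-time $2d$-approximation for \msr{d} even when $d$ is part of the input.

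The hard part will be verifying the completeness of the polynomial candidate family, i.e.\ that every optimal \msr{d} solution decomposes --- without loss of total strip length --- into candidates from the enumerated list. This requires a structural argument exploiting the fact that, when markers are distinct in each map, a strip's interior is essentially pinned down by its endpoints and $d$-interval. A secondary obstacle is confirming the polynomial-in-$d$ running time of Bar-Yehuda \etal's algorithm, which is primarily a careful reading of their proof rather than new technical content.
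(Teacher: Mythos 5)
Your high-level strategy coincides with the paper's: enumerate a polynomial family of candidate strips, build a weighted $d$-interval graph over them, and run Bar-Yehuda \etal's fractional local-ratio algorithm. But the step you defer as ``the hard part'' --- completeness of the candidate family --- is exactly where the content of the proof lies, and the fingerprint you propose does not deliver it. A strip with prescribed endpoint markers $a,b$ in $G_1$ is \emph{not} pinned down by those endpoints plus bounded orientation data: its interior may be any subset of the markers lying between $a$ and $b$ in $G_1$ that happens to occur contiguously in all $d$ chosen subsequences, so exponentially many distinct strips can share one fingerprint, with different weights and different minimal windows in the other maps. Any canonical representative you might pick (e.g., a longest common subsequence of the $d$ minimal windows --- itself hard to compute when $d$ is unbounded) need not be the strip an optimal solution actually uses, and need not be compatible with that solution's other strips. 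So as stated your candidate family is neither well defined nor provably complete.

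The missing idea in the paper is elementary: every strip of length $l \ge 4$ is a concatenation of sub-strips of lengths $2$ and $3$ ($4=2+2$, $5=2+3$, and so on), and cutting the strips of any feasible solution into such pieces preserves both feasibility (each piece still occurs contiguously in every subsequence, and the pieces' minimal windows remain pairwise disjoint) and the total strip length. Hence it suffices to take as candidates all $\binom{n}{2}+\binom{n}{3}=O(n^3)$ subsets of $2$ or $3$ markers; because all markers are distinct in each map, each such subset determines a unique candidate strip and a unique shortest containing substring of each genomic map, i.e., a unique $d$-interval of weight $2$ or $3$. Completeness is then immediate, and the $d$-interval graph has $O(n^3)$ vertices independently of $d$, after which the local-ratio step is polynomial. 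Your secondary worry about the dependence of Bar-Yehuda \etal's running time on $d$ evaporates once the candidate set is polynomial; but without the length-$2$/$3$ decomposition your argument for \msr{d} has a genuine gap.
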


Compare the upper bound of $2d$ in Theorem~\ref{thm:2d}
and the asymptotic lower bound of $\Omega(d/\log d)$ in Theorem~\ref{thm:msrd}.

Maximal Strip Recovery~\cite{ZZS07,CFJZ09} is a maximization problem.
Wang and Zhu~\cite{WZ09}
introduced Complement Maximal Strip Recovery as a minimization problem.
Given $d$ genomic maps as input,
the problem \cmsr{d} is the same as the problem \msr{d} except that
the objective is minimizing the number of deleted markers not in the strips,
instead of maximizing the number of markers in the strips.
A natural question is whether
a polynomial-time approximation scheme may be obtained for this problem.
Our following theorem shows that unless NP~$=$~P,
\cmsr{d} cannot be approximated arbitrarily well:

\begin{theorem}\label{thm:cmsrd}
\cmsr{d} for any $d \ge 2$ is APX-hard.
Moreover,
\cmsr{2}, \cmsr{3}, \cmsr{4}, and \cmsr{d} for any $d \ge 173$
are NP-hard to approximate within
$1.000625$,
$1.0101215$,
$1.0202429$,
and
$\frac{7}{6}- O(\log d /d)$,
respectively,
even if
all markers are distinct and appear in positive orientation in each genomic map.
If the number $d$ of genomic maps is not a constant but is part of the input,
then \cmsr{d} is NP-hard to approximate within any constant less than
$10\sqrt5 - 21 = 1.3606\ldots$,
even if
all markers are distinct and appear in positive orientation in each genomic map.
\end{theorem}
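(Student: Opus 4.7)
The plan is to piggyback on the \msr{d} constructions from Theorem~\ref{thm:msrd}. On any fixed instance the strip-length optimum $\val^\ast$ and the deleted-marker optimum $n-\val^\ast$ sum to the total marker count $n$, so an \msr{d} gap $\val^\ast \ge c$ versus $\val^\ast \le s$ becomes the \cmsr{d} gap $n-\val^\ast \le n-c$ versus $n-\val^\ast \ge n-s$, converting an \msr{d} hardness ratio $c/s$ into a \cmsr{d} hardness ratio $(n-s)/(n-c)$. The strict inequality $c/s>1$ forces $(n-s)/(n-c)>1$, so APX-hardness transfers for every constant $d \ge 2$. Reading off the triples $(n,c,s)$ from the three explicit reductions constructed for \msr{2}, \msr{3}, \msr{4} and plugging them into $(n-s)/(n-c)$ produces the stated thresholds $1.000625$, $1.0101215$, and $1.0202429$.

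The asymptotic bound $\tfrac{7}{6}-O(\log d/d)$ for constant $d$ is stronger than what this duality squeezes out of the $\Omega(d/\log d)$ \msr{d} result, so for this part I would rerun the $d$-map gadget used in Theorem~\ref{thm:msrd} but starting from a \textbf{minimization} source, namely Min-VC on bounded-degree graphs, which is NP-hard to approximate within $\tfrac{7}{6}$ (Berman--Karpinski-type bound). The strip obstructions translate edges into forbidden co-linear configurations across the $d$ coordinates, and the polynomial blow-up needed to represent a degree-bounded graph in $d$ genomic maps contributes the $O(\log d/d)$ slack; the threshold $d\ge 173$ is simply the point at which this slack first falls below $\tfrac{1}{6}$. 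For the regime in which $d$ is part of the input, I would discard the fixed-$d$ bookkeeping altogether: each of the (polynomially many) edges of the source Min-VC instance receives its own dedicated coordinate, so the reduction becomes gap-preserving directly from Min-VC on \emph{general} graphs, letting the Dinur--Safra threshold $10\sqrt{5}-21$ pass through verbatim.

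The main obstacle will be verifying the ``no-free-strip'' condition in each of these gadgets: every valid \cmsr{d} solution must correspond to a set of deletions of exactly the size of a vertex cover (or MIS-complement) in the source problem. Any accidental short strip slipping into the construction would let a \cmsr{d} solution undercut the corresponding cover, shrinking the gap. The verification amounts to showing that outside the planted edge gadgets the only common substrings across two or more maps are the ones explicitly introduced; keeping all markers distinct and in positive orientation makes this a manageable local check. Once the gadgets are in place, the final inapproximability constants are just the arithmetic of $(n-s)/(n-c)$ together with the Berman--Karpinski and Dinur--Safra gaps.
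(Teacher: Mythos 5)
Your high-level strategy---reuse the \msr{d} constructions, complement the objective via the identity $\val^* + x^* = (\text{total marker count})$, and switch the source problems to minimization---is indeed the route the paper takes. But two concrete things go wrong. First, \cmsr{4}: complementing the \msr{4} construction does \emph{not} yield $1.0202429$. That construction reduces from \mis{3} (a degree-$3$ graph split into two linear forests, plus $G_{\rightarrow}$ and $G_{\leftarrow}$), so its complement is an L-reduction from \mvc{3} and gives only the \mvc{3} bound $1.0101215$; the paper notes this explicitly. To get $1.0202429$ one must reduce from \mvc{4}, which requires decomposing a degree-$4$ graph into \emph{three} linear forests and hence the \msr{3}-style construction (one map $G_0$ with dummy markers plus $d-1$ forest maps), i.e., Lemma~\ref{lem:mvc} with $d-1$ rather than $d-2$ forests. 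Relatedly, your formula $(n-s)/(n-c)$ presupposes knowledge of where the hard gap $(c,s)$ sits relative to $n$ in the source instances; the paper sidesteps this by invoking the Chleb\'ik--Chleb\'ikov\'a bounds for \mvc{3} and \mvc{4} directly (these numbers are not recoverable by arithmetic from the \mis{3} bound $1.010661$ alone), and for \cmsr{2} by redoing the L-reduction analysis of the \sat{3}{2} construction with the deleted-marker count, which gives $\alpha\beta = 31/9$ and hence $1 + (1/464)/(31/9) = 1.000625$.

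Second, the asymptotic part. The $O(\log d/d)$ slack is not produced by any blow-up in the reduction: the L-reduction from \mvc{\Delta} to \cmsr{(f(\Delta)+1)} has $\alpha\beta = 1$ and transfers the source bound verbatim. The slack is already present in that source bound, namely $\frac{7}{6} - O(\log\Delta/\Delta)$ for \mvc{\Delta} with $\Delta \ge 228$ (Chleb\'ik--Chleb\'ikov\'a, not Berman--Karpinski). The threshold $d \ge 173$ is pure linear-arboricity bookkeeping: $f(\Delta) \le \lceil 3\lceil\Delta/2\rceil/2\rceil \le 171$ for $\Delta \le 227$, so $d = f(\Delta)+1 \ge 173$ forces $\Delta \ge 228$; it has nothing to do with the slack ``falling below $\frac{1}{6}$.'' For unbounded $d$, your one-coordinate-per-edge idea is a plausible alternative, but the paper again uses the linear-forest decomposition (with $\Delta \le n-1$, so $f(\Delta)$ is polynomial) so that the Dinur--Safra constant passes through with $\alpha\beta = 1$; whichever construction you choose, the canonical-form lemma (your ``no-free-strip'' condition) must be re-verified for it.
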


Note the similarity between Theorem~\ref{thm:msrd} and Theorem~\ref{thm:cmsrd}.
In fact, our proof of Theorem~\ref{thm:cmsrd} uses
exactly the same constructions as our proof of Theorem~\ref{thm:msrd}.
The only difference is in the analysis of the approximation lower bounds.

Bulteau, Fertin, and Rusu~\cite{BFR09} recently proposed a restricted variant
of Maximal Strip Recovery called $\delta$-gap-MSR, which is \msr{2}
with the additional constraint that at most $\delta$ markers
may be deleted between any two adjacent markers of a strip in each genomic map.
We now define \gapmsr{\delta}{d} and \gapcmsr{\delta}{d} as
the restricted variants of the two problems \msr{d} and \cmsr{d}, respectively,
with the additional $\delta$-gap constraint.
Bulteau \etal~\cite{BFR09} proved that
\gapmsr{\delta}{2} is APX-hard for any $\delta \ge 2$,
and is NP-hard for $\delta = 1$.
We extend our proofs of Theorem~\ref{thm:msrd} and Theorem~\ref{thm:cmsrd}
to obtain the following theorem on \gapmsr{\delta}{d} and \gapcmsr{\delta}{d}
for any $\delta \ge 2$:

\begin{theorem}\label{thm:gap}
Let $\delta \ge 2$. Then
\begin{enumerate}

\item[\textup{(1)}]
\gapmsr{\delta}{d} for any $d \ge 2$ is APX-hard.
Moreover,
\gapmsr{\delta}{2}, \gapmsr{\delta}{3}, \gapmsr{\delta}{4},
and \gapmsr{\delta}{d}
are NP-hard to approximate
within
$1.000431$,
$1.002114$,
$1.010661$,
and
$d/2^{O(\sqrt{\log d})}$,
respectively,
even if
all markers are distinct and appear in positive orientation in each genomic map.

\item[\textup{(2)}]
\gapcmsr{\delta}{d} for any $d \ge 2$ is APX-hard.
Moreover,
\gapcmsr{\delta}{2}, \gapcmsr{\delta}{3}, \gapcmsr{\delta}{4},
and \gapcmsr{\delta}{d} for any $d \ge 173$
are NP-hard to approximate within
$1.000625$,
$1.0101215$,
$1.0202429$,
and
$\frac{7}{6}- O(\log d /d)$,
respectively,
even if
all markers are distinct and appear in positive orientation in each genomic map.
If the number $d$ of genomic maps is not a constant but is part of the input,
then \gapcmsr{\delta}{d} is NP-hard to approximate within any constant less than
$10\sqrt5 - 21 = 1.3606\ldots$,
even if
all markers are distinct and appear in positive orientation in each genomic map.

\end{enumerate}
\end{theorem}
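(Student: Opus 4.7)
The plan is to show that the constructions used in the proofs of Theorems~\ref{thm:msrd} and~\ref{thm:cmsrd} already produce instances in which every strip arising in any (near-)optimal solution has consecutive markers separated by at most $\delta$ deleted markers in each genomic map, so that the $\delta$-gap restriction is vacuous on these instances. Because Theorem~\ref{thm:cmsrd} reuses exactly the same constructions as Theorem~\ref{thm:msrd}, a single locality verification would simultaneously establish both parts (1) and (2) of Theorem~\ref{thm:gap} for the constants $1.000431, 1.002114, 1.010661$ and $1.000625, 1.0101215, 1.0202429$, as well as the non-constant-$d$ bound $10\sqrt5 - 21$ for \gapcmsr{\delta}{d}.

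Concretely, I would first recall the gadget layout from the proof of Theorem~\ref{thm:msrd}, in which each gene marker is placed at a designated position of each of the $d$ genomic maps to encode the variable, clause, or matching structure of the source problem (such as \sat{}{} or \dm{d}). I would then enumerate the possible shapes of a strip in any candidate solution, and argue that each such strip is built from markers that sit within a window of constant length in every map, with at most two intervening deleted markers, so that $\delta \ge 2$ suffices. Once this locality property is verified, every solution of the unrestricted MSR/CMSR instance is also a feasible solution of the corresponding \gapmsr{\delta}{d}/\gapcmsr{\delta}{d} instance with the same value, and conversely, and the inapproximability gap transfers verbatim to the gap variants.

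The main obstacle lies in the asymptotic bound of part~(1): the rate for \gapmsr{\delta}{d} is $d/2^{O(\sqrt{\log d})}$, which is weaker than the $\Omega(d/\log d)$ rate of Theorem~\ref{thm:msrd}. This suggests that the asymptotic reduction behind Theorem~\ref{thm:msrd} produces, for large $d$, strips whose markers are spread over a range that grows with $d$, so that the gap constraint cannot simply be ignored. To salvage the argument I would either pad the construction with auxiliary markers that break up long gaps (at the cost of an additional factor reflecting the padding) or substitute a slightly weaker locality-respecting hardness source, for instance a reduction from approximating Maximum Independent Set in a $d$-interval graph whose intervals have bounded length. In either case the degradation from $\Omega(d/\log d)$ to $d/2^{O(\sqrt{\log d})}$ should match the overhead introduced by the locality fix. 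The constant-$d$ cases and the non-constant-$d$ \gapcmsr{\delta}{d} bound should then follow without further work, since on the original constructions the strips are already local enough to satisfy the $\delta$-gap constraint.
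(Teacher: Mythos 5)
Your overall strategy matches the paper's: one verifies that the constructions behind Theorems~\ref{thm:msrd} and~\ref{thm:cmsrd} admit optimal canonical solutions with maximum gap $2$, so that for $\delta \ge 2$ the gap constraint is vacuous on these instances and all the constant-$d$ lower bounds, as well as the \gapcmsr{\delta}{d} bounds for $d \ge 173$ and for unbounded $d$ (which come from the linear-forest reductions of Lemma~\ref{lem:mvc} and are gap-$2$ by construction), transfer verbatim. One locality check you gloss over: a strip that is a pair of literal markers in the \sat{p}{q} construction has gap $q-1$ in $G_1$, which exceeds $2$ for $q \ge 4$; this is harmless only because the stated constants for \gapmsr{\delta}{2} and \gapcmsr{\delta}{2} are derived with $q=2$. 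Also, your claim that every solution of the unrestricted instance is feasible for the gap variant is backwards in general; what is actually needed is that the canonical optimal solutions satisfy the gap constraint and that feasible solutions of the gap variant are feasible for the unrestricted variant, so the L-reduction carries over.

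The genuine gap is the asymptotic bound $d/2^{O(\sqrt{\log d})}$ in part (1). You correctly diagnose that the \dm{d} reduction violates the gap constraint (a pair of edge markers can have arbitrarily many intervening markers when many hyper-edges share a vertex), but neither of your proposed fixes is carried out, and the one you lean toward misattributes the loss: the factor $2^{O(\sqrt{\log d})}$ is not ``overhead introduced by the locality fix.'' The paper abandons \dm{d} entirely for this case and instead invokes Lemma~\ref{lem:mis}: decomposing a graph of maximum degree $\Delta$ into $f(\Delta)$ linear forests gives an L-reduction from \mis{\Delta} to \msr{(f(\Delta)+2)} with $\alpha\beta = 1$, and this construction is already gap-$2$ (in each $\langle E_i \rangle$ block at most two markers separate the two markers of a vertex pair). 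Combining this with Trevisan's result that \mis{\Delta} is NP-hard to approximate within $\Delta/2^{O(\sqrt{\log \Delta})}$ and with $f(\Delta)+2 = \Theta(\Delta)$ from~\eqref{eq:f} yields the claimed bound; the subexponential factor comes directly from the hardness of bounded-degree Maximum Independent Set, not from padding. Your alternative source (independent set in $d$-interval graphs with bounded-length intervals) is not what is used and would require its own hardness proof. Without this piece, the $d/2^{O(\sqrt{\log d})}$ claim in part (1) remains unproved.
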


We refer to~\cite{CZZS07,Ji10c,BFJR10} for some related results.
Maximal Strip Recovery is a typical combinatorial problem
in biological sequence analysis, in particular, genome rearrangement.
The earliest inapproximability result for genome rearrangement problems
is due to Berman and Karpinski~\cite{BK99},
who proved that Sorting by Reversals
is NP-hard to approximate within any constant less than $\frac{1237}{1236}$.
More recently,
Zhu and Wang~\cite{ZW06} proved that
Translocation Distance
is NP-hard to approximate within any constant less than $\frac{5717}{5716}$.
Similar inapproximability results have also been obtained
for other important problems in bioinformatics.
For example,
Nagashima and Yamazaki~\cite{NY04} proved that Non-overlapping Local Alignment
is NP-hard to approximate within any constant less than $\frac{8668}{8665}$,
and Manthey~\cite{Ma05} proved that Multiple Sequence Alignment
with weighted sum-of-pairs score is APX-hard for arbitrary metric
scoring functions over the binary alphabet.


The rest of this paper is organized as follows.
We first review some preliminaries in Section~\ref{sec:pre}.
Then,
in Sections \ref{sec:msr4}, \ref{sec:msr3}, \ref{sec:msr2}, and \ref{sec:msrd},
we show that \msr{d} for any $d \ge 2$ is APX-hard,
and prove explicit approximation lower bounds.
(For any two constants $d$ and $d'$ such that $d' > d \ge 2$,
the problem \msr{d} is a special case of the problem \msr{d'}
with $d'-d$ redundant genomic maps.
Thus the APX-hardness of \msr{2} implies
the APX-hardness of \msr{d} for all constants $d \ge 2$.
To present the ideas progressively,
however,
we show that \msr{4}, \msr{3}, and \msr{2} are APX-hard
by three different L-reductions of increasing sophistication.)
In Section~\ref{sec:2d}, we present a $2d$-approximation algorithm for \msr{d}
that runs in polynomial time even if the number $d$ of genomic maps
is not a constant but is part of the input.
In Section~\ref{sec:more}, we present inapproximability results for \cmsr{d},
\gapmsr{\delta}{d}, and \gapcmsr{\delta}{d}.
We conclude with remarks in Section~\ref{sec:remarks}.

\section{Preliminaries}
\label{sec:pre}

\paragraph{L-reduction.}

Given two optimization problems X and Y,
an \emph{L-reduction}~\cite{PY91} from X to Y consists of
two polynomial-time functions $f$ and $g$
and two positive constants $\alpha$ and $\beta$
satisfying the following two properties:

\begin{enumerate}
\item
For every instance $x$ of X,
$f(x)$ is an instance of Y such that
\begin{equation}\label{eq:L1}
\opt(f(x)) \le \alpha \cdot \opt(x),
\end{equation}
\item
For every feasible solution $y$ to $f(x)$,
$g(y)$ is a feasible solution to $x$ such that
\begin{equation}\label{eq:L2}
|\opt(x) - \val(g(y))| \le \beta \cdot |\opt(f(x)) - \val(y)|.
\end{equation}
\end{enumerate}

Here
$\opt(x)$ denotes the value of the optimal solution to an instance $x$,
and
$\val(y)$ denotes the value of a solution $y$.
The two properties of L-reduction imply
the following inequality on the relative errors of approximation:
$$
\frac{|\opt(x) - \val(g(y))|}{\opt(x)}
\le \alpha\beta \cdot \frac{|\opt(f(x)) - \val(y)|}{\opt(f(x))}.
$$

A relative error of $\epsilon$
corresponds to an approximation factor of $1+\epsilon$
for a minimization problem,
and
corresponds to an approximation factor of $\frac1{1-\epsilon}$
for a maximization problem.
Thus we have the following propositions:
\begin{enumerate}

\item
For a minimization problem X and a minimization problem Y,
if X is NP-hard to approximate within
$1+\alpha\beta\epsilon$,
then Y is NP-hard to approximate within
$1+\epsilon$.

\item
For a maximization problem X and a maximization problem Y,
if X is NP-hard to approximate within
$\frac1{1-\alpha\beta\epsilon}$,
then Y is NP-hard to approximate within
$\frac1{1-\epsilon}$.

\item
For a minimization problem X and a maximization problem Y,
if X is NP-hard to approximate within
$1+\alpha\beta\epsilon$,
then Y is NP-hard to approximate within
$\frac1{1-\epsilon}$.

\item
For a maximization problem X and a minimization problem Y,
if X is NP-hard to approximate within
$\frac1{1-\alpha\beta\epsilon}$,
then Y is NP-hard to approximate within
$1+\epsilon$.

\end{enumerate}

\paragraph{APX-hard optimization problems.}

We review the complexities of some APX-hard optimization problems that will
be used in our reductions.
\begin{itemize}

\item
\mis{\Delta} is the problem Maximum Independent Set
in graphs of maximum degree $\Delta$.
\mis{3} is APX-hard; see~\cite{AK00}.
Moreover,
Chleb\'ik and Chleb\'ikov\'a~\cite{CC06} showed that
\mis{3} and \mis{4} are NP-hard to approximate within
$1.010661$
and
$1.0215517$,
respectively.
Trevisan~\cite{Tr01} showed that
\mis{\Delta} is NP-hard to approximate within
$\Delta/2^{O(\sqrt{\log \Delta})}$.

\item
\mvc{\Delta} is the problem Minimum Vertex Cover
in graphs of maximum degree $\Delta$.
\mvc{3} is APX-hard; see~\cite{AK00}.
Moreover,
Chleb\'ik and Chleb\'ikov\'a~\cite{CC06} showed that
\mvc{3} and \mvc{4}
are NP-hard to approximate within
$1.0101215$
and
$1.0202429$,
respectively,
and, for any $\Delta \ge 228$,
\mvc{\Delta} is NP-hard to approximate within
$\frac{7}{6} - O(\log \Delta /\Delta)$.
Dinur and Safra~\cite{DS05} showed that Minimum Vertex Cover
is NP-hard to approximate within any constant less than
$10\sqrt5 - 21 = 1.3606\ldots$.

\item
Given a set $X$ of $n$ variables and a set $\C$ of $m$ clauses,
where each variable has exactly $p$ literals
(in $p$ different clauses)
and each clause is the disjunction of exactly $q$ literals
(of $q$ different variables),
\sat{p}{q} is the problem of finding an assignment of $X$ that satisfies
the maximum number of clauses in $\C$.
Note that $np = mq$.
Berman and Karpinski~\cite{BK03} showed that
\sat{3}{2} is NP-hard to approximate within any constant less than
$\frac{464}{463}$.

\item
Given $d$ disjoint sets $V_i$ of vertices, $1 \le i \le d$,
and given a set $E \subseteq V_1 \times \cdots \times V_d$ of hyper-edges,
\dm{d} is the problem of finding
a maximum-cardinality subset $M \subseteq E$ of pairwise-disjoint hyper-edges.
Hazan, Safra, and Schwartz~\cite{HSS06} showed that
\dm{d} is NP-hard to approximate within $\Omega(d/\log d)$.

\end{itemize}

\paragraph{Linear forest and linear arboricity.}

A \emph{linear forest} is a graph in which every connected component is a path.
The \emph{linear arboricity} of a graph is
the minimum number of linear forests into which the edges of the graph
can be decomposed.
Akiyama, Exoo, and Harary~\cite{AEH80,AEH81} conjectured
that the linear arboricity of every graph $G$ of maximum degree $\Delta$
satisfies $\la(G) \le \lceil (\Delta + 1)/2 \rceil$.
This conjecture has been confirmed for graphs of small constant degrees,
and has been shown to be asymptotically correct
as $\Delta \to \infty$~\cite{Al88}.
In particular,
the proof of the conjecture for $\Delta = 3$ and $4$
are constructive~\cite{AEH80,AC81,AEH81}
and lead to polynomial-time algorithms for
decomposing any graph of maximum degree $\Delta = 3$ and $4$
into at most $\lceil (\Delta + 1)/2 \rceil = 2$ and $3$ linear forests,
respectively.
Also, the proof of the first upper bound on linear arboricity by
Akiyama, Exoo, and Harary~\cite{AEH81}
implies a simple polynomial-time algorithm for
decomposing any graph of maximum degree $\Delta$
into at most $\lceil 3\lceil \Delta/2 \rceil/2 \rceil$ linear forests.

Define
$$
f(\Delta) = \max_G f(G),
$$
where $G$ ranges over all graphs of maximum degree $\Delta$,
and $f(G)$ denotes the number of linear forests that
Akiyama, Exoo, and Harary's algorithm~\cite{AEH81}
decomposes $G$ into.
Then
\begin{equation}\label{eq:f}
\lceil (\Delta + 1)/2 \rceil
\le f(\Delta) \le
\lceil 3\lceil \Delta/2 \rceil/2 \rceil.
\end{equation}

\section{\msr{4} is APX-hard}
\label{sec:msr4}

In this section,
we prove that \msr{4} is APX-hard by a simple L-reduction from \mis{3}.
Before we present the L-reduction,
we first show that \msr{4} is NP-hard by a reduction
in the classical style, which is perhaps more familiar to most readers.
Throughout this paper, we follow this progressive format of presentation.

\subsection{NP-hardness reduction from \mis{3} to \msr{4}}

Let $G$ be a graph of maximum degree $3$.
Let $n$ be the number of vertices in $G$.
Partition the edges of $G$ into two linear forests $E_1$ and $E_2$.
Let $V_1$ and $V_2$ be the vertices of $G$
that are \emph{not} incident to any edges in $E_1$ and in $E_2$, respectively.
We construct four genomic maps
$G_{\rightarrow}$, $G_{\leftarrow}$, $G_1$, and $G_2$,
where each map is a permutation of the following $2n$ distinct markers
all in positive orientation:
\begin{itemize}

\item
$n$ pairs of vertex markers $\zl i$ and $\zr i$, $1 \le i \le n$.

\end{itemize}
$G_{\rightarrow}$ and $G_{\leftarrow}$
are concatenations of the $n$ pairs of vertex markers
with ascending and descending indices, respectively:
$$
\begin{array}{lc}

G_{\rightarrow}: & \zl 1 \; \zr 1 \quad \cdots \quad \zl n \; \zr n \\

G_{\leftarrow}: & \zl n \; \zr n \quad \cdots \quad \zl 1 \; \zr 1

\end{array}
$$
$G_1$ and $G_2$ are represented schematically as follows:
$$
\begin{array}{lcc}

G_1: & \langle E_1 \rangle & \langle V_1 \rangle \\

G_2: & \langle E_2 \rangle & \langle V_2 \rangle

\end{array}
$$
\begin{description}

\item
$\langle E_1 \rangle$ and $\langle E_2 \rangle$
consist of vertex markers of the vertices
incident to the edges in $E_1$ and $E_2$, respectively.
The markers of the vertices in each path $v_1 v_2 \ldots v_k$
are grouped together in an interleaving pattern:
for $1 \le i \le k$,
the left marker of $v_i$,
the right marker of $v_{i-1}$ (if $i > 1$),
the left marker of $v_{i+1}$ (if $i < k$),
and the right marker of $v_i$
are consecutive.

\item
$\langle V_1 \rangle$ and $\langle V_2 \rangle$
consist of vertex markers of the vertices in $V_1$ and $V_2$,
respectively.
The left marker and the right marker of each pair
are consecutive.

\end{description}
This completes the construction.
We refer to Figure~\ref{fig:msr4} (a) and (b) for an example.

\begin{figure}[htb]
\centering
\includegraphics{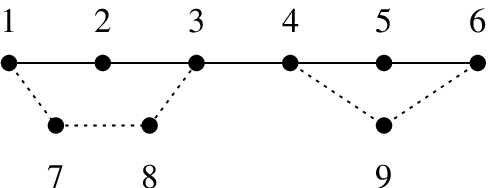}\\(a)

$$
\begin{array}{c}

\zl 1 \; \zr 1 \;
\zl 2 \; \zr 2 \;
\zl 3 \; \zr 3 \;
\zl 4 \; \zr 4 \;
\zl 5 \; \zr 5 \;
\zl 6 \; \zr 6 \;
\zl 7 \; \zr 7 \;
\zl 8 \; \zr 8 \;
\zl 9 \; \zr 9 \\

\zl 9 \; \zr 9 \;
\zl 8 \; \zr 8 \;
\zl 7 \; \zr 7 \;
\zl 6 \; \zr 6 \;
\zl 5 \; \zr 5 \;
\zl 4 \; \zr 4 \;
\zl 3 \; \zr 3 \;
\zl 2 \; \zr 2 \;
\zl 1 \; \zr 1 \\

\zl 1 \;
\zl 2 \;
\zr 1 \;
\zl 3 \;
\zr 2 \;
\zl 4 \;
\zr 3 \;
\zl 5 \;
\zr 4 \;
\zl 6 \;
\zr 5 \;
\zr 6 \quad

\zl 7 \; \zr 7 \;
\zl 8 \; \zr 8 \;
\zl 9 \; \zr 9 \\

\zl 1 \;
\zl 7 \;
\zr 1 \;
\zl 8 \;
\zr 7 \;
\zl 3 \;
\zr 8 \;
\zr 3 \quad

\zl 4 \;
\zl 9 \;
\zr 4 \;
\zl 6 \;
\zr 9 \;
\zr 6 \quad

\zl 2 \; \zr 2 \;
\zl 5 \; \zr 5

\end{array}
$$
(b)

$$
\begin{array}{c}

\zl 2 \; \zr 2 \;
\zl 4 \; \zr 4 \;
\zl 6 \; \zr 6 \;
\zl 8 \; \zr 8 \\

\zl 8 \; \zr 8 \;
\zl 6 \; \zr 6 \;
\zl 4 \; \zr 4 \;
\zl 2 \; \zr 2 \\

\zl 2 \; \zr 2 \;
\zl 4 \; \zr 4 \;
\zl 6 \; \zr 6 \;
\zl 8 \; \zr 8 \\

\zl 8 \; \zr 8 \;
\zl 4 \; \zr 4 \;
\zl 6 \; \zr 6 \;
\zl 2 \; \zr 2

\end{array}
$$
(c)

\caption{\small%
(a) The graph $G$:
$E_1$ is a single solid path $\langle 1,2,3,4,5,6 \rangle$,
$E_2$ consists of two dotted paths
$\langle 1,7,8,3 \rangle$ and $\langle 4,9,6 \rangle$,
$V_1 = \{ 7,8,9 \}$,
$V_2 = \{ 2,5 \}$.
(b) The four genomic maps $G_{\rightarrow}, G_{\leftarrow}, G_1, G_2$.
(c) The four subsequences of the genomic maps
corresponding to the independent set $\{ 2,4,6,8 \}$ in the graph.}
\label{fig:msr4}
\end{figure}

Two pairs of markers \emph{intersect} in a genomic map if
a marker of one pair appears between the two markers of the other pair.
The following property of our construction is obvious:

\begin{proposition}\label{prp:msr4}
Two vertices are adjacent in the graph $G$ if and only if
the corresponding two pairs of vertex markers intersect
in one of the two genomic maps $G_1, G_2$.
\end{proposition}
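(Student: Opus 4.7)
The plan is to read off, directly from the construction, exactly when two pairs $\{\zl u, \zr u\}$ and $\{\zl v, \zr v\}$ can intersect in $G_1$; the case of $G_2$ will be entirely symmetric. The key structural observation I would extract is the following: in $G_1$, which is the concatenation $\langle E_1 \rangle \cdot \langle V_1 \rangle$, any vertex $w \in V_1$ has its two markers $\zl w \,\zr w$ immediately adjacent, while for a vertex $w$ occupying position $v_i$ on a path $v_1 v_2 \ldots v_k$ of $E_1$, the only markers appearing strictly between $\zl w$ and $\zr w$ in $G_1$ are $\zr{v_{i-1}}$ (when $i>1$) and $\zl{v_{i+1}}$ (when $i<k$). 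This follows from the interleaving rule stated in the construction together with the fact that distinct paths of $E_1$ contribute disjoint consecutive blocks to $\langle E_1 \rangle$, so no marker from a different path or from $V_1$ can sneak in between $\zl w$ and $\zr w$.

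Once this characterization is in hand, both directions of the biconditional fall out. For the forward direction, if $u$ and $v$ are adjacent in $G$ via an edge of $E_1$, they are consecutive on some path of $E_1$, and the observation above places one of $\zl v, \zr v$ strictly between $\zl u$ and $\zr u$ in $G_1$; if the edge lies in $E_2$, the same argument applied to $G_2$ yields an intersection there. For the reverse direction, suppose the two pairs intersect in $G_1$. Neither pair can belong to $\langle V_1 \rangle$, since such a pair is consecutive and can neither bracket nor be bracketed by another pair; hence both $u$ and $v$ lie in $\langle E_1 \rangle$ on paths of $E_1$, and the characterization forces them to be path-neighbors there, i.e., $\{u,v\} \in E_1$. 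The case of an intersection in $G_2$ is symmetric and gives $\{u,v\} \in E_2$. Since $E(G)=E_1\cup E_2$ by construction, this yields the biconditional.

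The only real risk is a misreading of the rather compressed "interleaving" description of $\langle E_1 \rangle$; I would safeguard against this by writing out the full pattern $\zl{v_1}, \zl{v_2}, \zr{v_1}, \zl{v_3}, \zr{v_2}, \ldots, \zl{v_k}, \zr{v_{k-1}}, \zr{v_k}$ for a single path and cross-checking against the worked example in Figure~\ref{fig:msr4}(b), after which every claim above becomes an immediate inspection.
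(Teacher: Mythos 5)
Your proposal is correct: the paper offers no proof of Proposition~\ref{prp:msr4} (it is declared ``obvious''), and your argument is exactly the intended verification, namely unpacking the interleaving pattern to see that the only markers lying strictly between $\zl{w}$ and $\zr{w}$ in $G_1$ (resp.\ $G_2$) are the right marker of $w$'s predecessor and the left marker of $w$'s successor on its path in $E_1$ (resp.\ $E_2$), so two pairs intersect precisely when the vertices are path-neighbors, and $E_1 \cup E_2 = E(G)$ closes the biconditional. Your written-out single-path pattern matches Figure~\ref{fig:msr4}(b), and your handling of the $\langle V_1 \rangle$, $\langle V_2 \rangle$ blocks (consecutive pairs in a separate block, hence intersection-free) is sound.
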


We say that four subsequences of the four genomic maps
$G_{\rightarrow}, G_{\leftarrow}, G_1, G_2$
are \emph{canonical} if
each strip of the subsequences is a pair of vertex markers.
We have the following lemma on canonical subsequences:

\begin{lemma}\label{lem:canon4}
In any four subsequences of the four genomic maps
$G_{\rightarrow}, G_{\leftarrow}, G_1, G_2$, respectively,
each strip must be a pair of vertex markers.
\end{lemma}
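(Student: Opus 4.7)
The plan is to combine the constraints from the orienting maps $G_{\rightarrow}, G_{\leftarrow}$ with those from the structural maps $G_1, G_2$. I would begin with $G_{\rightarrow}, G_{\leftarrow}$: although $G_{\leftarrow}$ reverses the block order of the $n$ vertex pairs relative to $G_{\rightarrow}$, within every pair it still places $\zl i$ before $\zr i$. So for any marker set $S$, the $G_{\leftarrow}$-order on $S$ is obtained from the $G_{\rightarrow}$-order by reversing the sequence of distinct indices appearing in $S$ while keeping the $\zl$-before-$\zr$ order inside every index block. For $S$ to appear contiguously up to one reversal in both $G'_{\rightarrow}$ and $G'_{\leftarrow}$, this $G_{\leftarrow}$-order must equal the $G_{\rightarrow}$-order or its literal reverse. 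A short case distinction reduces this to two possibilities: either $S$ is contained in a single vertex pair, in which case $S=\{\zl i,\zr i\}$ and the lemma is proved, or $S$ contains at most one of $\zl i,\zr i$ for every index $i$. Only the second case with $|S|\ge 3$ remains.

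For this remaining case, fix three strip markers $m_1, m_2, m_3$ with distinct indices $i_1<i_2<i_3$. Contiguity inside $G'_1$ forces their $G_1$-positions to appear either in the $G_{\rightarrow}$ order $m_1, m_2, m_3$ or in its reverse; the same holds in $G_2$. Using Proposition~\ref{prp:msr4}, I would read off from the explicit interleaving of each path in $\langle E_1\rangle$ that whenever $v_{i_p}$ and $v_{i_q}$ are joined by an edge of $E_1$, the two markers of one pair nest strictly inside the range of the other pair in $G_1$; this nesting spoils monotonicity for any triple touching both pairs. A symmetric argument using $\langle E_2\rangle$ handles edges of $E_2$. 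Since $E_1\cup E_2$ is the entire edge set of $G$, any three indices whose pairs avoid nesting in both maps must correspond to three vertices sitting simultaneously in $V_1$ and $V_2$; the within-block adjacency of $\zl i$ and $\zr i$ in $\langle V_1\rangle, \langle V_2\rangle$ then collapses the contiguity requirement back into a single-index configuration, contradicting the distinctness of $i_1, i_2, i_3$.

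I expect the principal obstacle to be the uniform organization of the case analysis in the second step. The cleanest presentation should isolate a single \emph{nesting-versus-monotonicity} observation about the marker pairs of three indices in $G_1$ (and its $G_2$ analogue) and invoke it in each possible distribution of $v_{i_1}, v_{i_2}, v_{i_3}$ over the paths of $E_1, E_2$ and the isolated sets $V_1, V_2$.
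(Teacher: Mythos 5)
There is a genuine gap, and it originates in your reading of what a strip is. You allow a strip to appear in a subsequence ``up to one reversal,'' i.e.\ as an unsigned reversal. In this construction all markers are distinct and positive, so the \emph{signed} reversal of a strip consists of negative markers and can never occur in any subsequence; equivalently, the paper's formal statement of the basic problem requires each subsequence $G'_i$ to be a concatenation of the strips $S_j$ as written. Hence a strip must appear in the \emph{same forward order} in all four subsequences. With that, the entire lemma is one observation: any two markers with different indices occur in opposite relative orders in $G_{\rightarrow}$ and in $G_{\leftarrow}$, so no strip can contain them both; a strip confined to a single index can only be the pair $\zl i\;\zr i$. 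This is exactly the paper's (one-line) proof. Your second case, and all of the analysis of $\langle E_1\rangle$, $\langle E_2\rangle$, $\langle V_1\rangle$, $\langle V_2\rangle$, is unnecessary.

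Worse, under your reading the lemma is actually false, so the second step cannot be repaired. Take two non-adjacent vertices $i<j$ and the set $S=\{\zl i,\zl j\}$: choosing subsequences that retain only these two markers, $S$ appears contiguously as $\zl i\,\zl j$ in $G_{\rightarrow}$, as its reverse in $G_{\leftarrow}$, and in one of these two orders in each of $G_1,G_2$ --- a valid ``strip'' if unsigned reversal is permitted. The same happens for triples: three pairwise non-adjacent vertices all lying in $V_1$ contribute non-nested, monotone blocks to $\langle V_1\rangle$, so your nesting-versus-monotonicity obstruction does not arise, and your claim that avoiding nesting in both maps forces the vertices into $V_1\cap V_2$ is incorrect (it would make them isolated, which non-adjacent vertices need not be). The fix is not to organize the case analysis better but to discard the reversal allowance and use only $G_{\rightarrow}$ versus $G_{\leftarrow}$.
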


\begin{proof}
By construction,
a strip cannot include two vertex markers of different indices
because they appear in different orders
in $G_{\rightarrow}$ and in $G_{\leftarrow}$.
\end{proof}

The following lemma establishes the NP-hardness of \msr{4}:

\begin{lemma}\label{lem:iff4}
The graph $G$ has an independent set of at least $k$ vertices
if and only if
the four genomic maps
$G_{\rightarrow}, G_{\leftarrow}, G_1, G_2$
have four subsequences
whose total strip length $l$ is at least $2k$.
\end{lemma}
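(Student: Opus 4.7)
The plan is to prove the two directions of the biconditional separately, using Proposition \ref{prp:msr4} and Lemma \ref{lem:canon4} as the workhorses.

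For the forward direction, I would suppose $I \subseteq V(G)$ is an independent set of size $k$, and I would construct the four subsequences $G'_{\rightarrow}, G'_{\leftarrow}, G'_1, G'_2$ to be exactly the restrictions of $G_{\rightarrow}, G_{\leftarrow}, G_1, G_2$ to the $2k$ vertex markers $\{\zl i, \zr i : i \in I\}$. The claim is that for every $i \in I$, the pair $\langle \zl i, \zr i \rangle$ is a strip of length $2$ in all four subsequences. In $G_{\rightarrow}$ and $G_{\leftarrow}$, the pair is consecutive by construction, so it stays consecutive after restriction. In $G_1$ and $G_2$, I would use Proposition \ref{prp:msr4}: since $I$ is independent, no other pair $\langle \zl j, \zr j \rangle$ with $j \in I$ intersects $\langle \zl i, \zr i \rangle$ in $G_1$ or $G_2$, so no selected marker lies between $\zl i$ and $\zr i$ in the restricted sequence. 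Thus all $k$ pairs become strips, yielding total strip length $l = 2k$.

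For the reverse direction, I would suppose the four genomic maps admit subsequences with total strip length $l \ge 2k$. Lemma \ref{lem:canon4} forces every strip to be a single pair $\langle \zl i, \zr i \rangle$, so the subsequences consist of at least $k$ such pairs; let $I$ be the set of their indices, $|I| \ge k$. I would then argue that $I$ is independent in $G$: for every $i \in I$, the markers $\zl i$ and $\zr i$ appear consecutively in the subsequence of $G_1$ (and of $G_2$) since they form a strip there, so no other chosen pair can intersect it in $G_1$ or $G_2$. By Proposition \ref{prp:msr4}, no two vertices in $I$ are adjacent, so $I$ is an independent set of size at least $k$.

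There is no genuine obstacle here; the combinatorial content has already been isolated in Proposition \ref{prp:msr4} (intersections of pairs $\Leftrightarrow$ edges of $G$) and Lemma \ref{lem:canon4} (strips $=$ single pairs). The only subtlety worth double-checking is that a strip within a subsequence is defined by contiguity in the subsequence, not in the original map, so I must be careful to convert ``no intersecting pair in $G_j$'' into ``contiguous in the restricted subsequence of $G_j$''; this is immediate because intersection is determined by the relative order of the four markers of two pairs, which is preserved under restriction.
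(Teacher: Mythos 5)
Your proposal is correct and follows essentially the same route as the paper: the forward direction restricts the maps to the markers of the independent set and uses Proposition~\ref{prp:msr4} to see each pair becomes a strip, and the reverse direction combines Lemma~\ref{lem:canon4} (strips are single pairs) with Proposition~\ref{prp:msr4} to recover an independent set. Your extra remark about contiguity being determined by relative order, preserved under restriction, is a fair point that the paper leaves implicit.
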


\begin{proof}
We first prove the ``only if'' direction.
Suppose that
the graph $G$ has an independent set of at least $k$ vertices.
We will show that
the four genomic maps
$G_{\rightarrow}, G_{\leftarrow}, G_1, G_2$
have four subsequences of total strip length at least $2k$.
By Proposition~\ref{prp:msr4},
the $k$ vertices in the independent set
correspond to $k$ pairs of vertex markers that do not intersect
each other in the genomic maps.
These $k$ pairs of vertex markers
induce a subsequence of length $2k$ in each genomic map.
In each subsequence,
the left marker and the right marker of each pair appear consecutively
and compose a strip.
Thus the total strip length is at least $2k$.
We refer to Figure~\ref{fig:msr4}(c) for an example.

We next prove the ``if'' direction.
Suppose that
the four genomic maps
$G_{\rightarrow}, G_{\leftarrow}, G_1, G_2$
have four subsequences of total strip length at least $2k$.
We will show that
the graph $G$ has an independent set of at least $k$ vertices.
By Lemma~\ref{lem:canon4},
each strip of the subsequences must be a pair of vertex markers.
Thus we obtain at least $k$ pairs of vertex markers
that do not intersect each other in the genomic maps.
Then, by Proposition~\ref{prp:msr4},
the corresponding set of at least $k$ vertices in the graph $G$
form an independent set.
\end{proof}

\subsection{L-reduction from \mis{3} to \msr{4}}

We present an L-reduction $(f, g, \alpha, \beta)$ from \mis{3} to \msr{4}
as follows.
The function $f$,
given a graph $G$ of maximum degree $3$,
constructs the four genomic maps
$G_{\rightarrow}, G_{\leftarrow}, G_1, G_2$
as in the NP-hardness reduction.
Let $k^*$ be the number of vertices in a maximum independent set in $G$,
and let $l^*$ be the maximum total strip length of any four subsequences of
$G_{\rightarrow}, G_{\leftarrow}, G_1, G_2$, respectively.
By Lemma~\ref{lem:iff4},
we have
$$
l^*= 2k^*.
$$
Choose $\alpha = 2$,
then property~\eqref{eq:L1} of L-reduction is satisfied.

The function $g$,
given four subsequences of the four genomic maps
$G_{\rightarrow}, G_{\leftarrow}, G_1, G_2$, respectively,
returns an independent set of vertices in the graph $G$
corresponding to
the pairs of vertex markers that are strips of the subsequences.
Let $l$ be the total strip length of the subsequences,
and let $k$ be the number of vertices in the independent set
returned by the function $g$.
Then $k \ge l/2$.
It follows that
$$
|k^* - k| =  k^* - k \le l^*/2 - l/2 = |l^* - l|/2.
$$
Choose $\beta = 1/2$,
then property~\eqref{eq:L2} of L-reduction is also satisfied.

We have obtained an L-reduction from \mis{3} to \msr{4}
with $\alpha\beta = 1$.
Chleb\'ik and Chleb\'ikov\'a~\cite{CC06} showed that
\mis{3} is NP-hard to approximate within $1.010661$.
It follows that \msr{4} is also NP-hard to approximate within $1.010661$.
The lower bound extends to \msr{d} for all constants $d \ge 4$.

The L-reduction from \mis{3} to \msr{4} can be obviously generalized:

\begin{lemma}\label{lem:mis}
Let $\Delta \ge 3$ and $d \ge 4$.
If there is a polynomial-time algorithm for
decomposing any graph of maximum degree $\Delta$ into $d-2$ linear forests,
then there is an L-reduction from \mis{\Delta} to \msr{d}
with constants $\alpha=2$ and $\beta=1/2$.
\end{lemma}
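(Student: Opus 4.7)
The plan is to imitate the reduction from \mis{3} to \msr{4} from the previous subsection, using the hypothesized decomposition algorithm to produce $d-2$ linear forests instead of just $2$. Given an instance $G$ of \mis{\Delta}, the function $f$ first invokes the assumed polynomial-time algorithm to obtain a decomposition of the edges of $G$ into linear forests $E_1,\ldots,E_{d-2}$. Let $V_j$ denote the set of vertices of $G$ not incident to any edge of $E_j$, for $1 \le j \le d-2$. Then $f$ outputs the $d$ genomic maps
$$G_{\rightarrow},\quad G_{\leftarrow},\quad G_1,\ldots,G_{d-2},$$
where $G_{\rightarrow}$ and $G_{\leftarrow}$ are the ascending and descending concatenations of the $n$ pairs $\zl i\,\zr i$, and each $G_j$ is the schematic map $\langle E_j \rangle\,\langle V_j \rangle$ built exactly as in the construction for \msr{4}.

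Next I would verify the two structural facts that made the \msr{4} reduction work. The first is the analogue of Proposition~\ref{prp:msr4}: two vertices $u,v$ are adjacent in $G$ iff the edge $uv$ lies in some $E_j$, iff the corresponding pairs of vertex markers intersect in that $G_j$ (pairs are interleaved inside $\langle E_j\rangle$ precisely along each path of $E_j$, while pairs in $\langle V_j\rangle$ and pairs from distinct path-components of $\langle E_j\rangle$ are disjoint). This direction relies on the decomposition being a partition of the edges, so every adjacency is witnessed in at least one $G_j$. The second is the analogue of Lemma~\ref{lem:canon4}: the presence of $G_{\rightarrow}$ and $G_{\leftarrow}$ still forbids any strip containing two pairs of different indices, so every strip in any feasible solution must be a single pair $\zl i\,\zr i$. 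Neither argument depends on the number of intermediate maps, so both go through unchanged.

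With those in hand, I would repeat the proof of Lemma~\ref{lem:iff4} verbatim to conclude that $G$ has an independent set of size $\ge k$ iff the $d$ maps admit subsequences of total strip length $\ge 2k$. In particular $l^* = 2k^*$, which yields property~\eqref{eq:L1} with $\alpha = 2$. For property~\eqref{eq:L2}, given any feasible solution of total strip length $l$, $g$ reads off the set of indices whose pair is a strip and returns the corresponding vertex set, which Proposition~\ref{prp:msr4} certifies is independent; since each strip contributes $2$ to $l$, the number $k$ of returned vertices satisfies $k \ge l/2$, hence $|k^*-k| = k^*-k \le (l^*-l)/2 = |l^*-l|/2$, giving $\beta = 1/2$.

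I do not anticipate a serious obstacle: the entire content of the generalization is the existence of the linear-forest decomposition, which is supplied as a hypothesis, and the combinatorial lemmas (the intersection characterization and the canonical-strip lemma) transfer without modification because $G_{\rightarrow}$ and $G_{\leftarrow}$ already pin down strip structure independently of how many maps encode the edges. The only point that deserves a sentence of care is confirming that placing each $E_j$ in its own map, rather than packing multiple forests together, is what makes the intersection characterization one-directional on the ``only if'' side and cleanly matched on the ``if'' side of Lemma~\ref{lem:iff4}.
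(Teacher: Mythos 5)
Your proposal is correct and follows exactly the route the paper intends: the paper states Lemma~\ref{lem:mis} as the ``obvious generalization'' of the \mis{3}-to-\msr{4} reduction, namely placing each of the $d-2$ linear forests in its own map $\langle E_j\rangle\,\langle V_j\rangle$ alongside $G_{\rightarrow}$ and $G_{\leftarrow}$, and your verification that Proposition~\ref{prp:msr4}, Lemma~\ref{lem:canon4}, and Lemma~\ref{lem:iff4} carry over unchanged (yielding $\alpha=2$, $\beta=1/2$) is precisely the argument being invoked.
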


\section{\msr{3} is APX-hard}
\label{sec:msr3}

In this section,
we prove that \msr{3} is APX-hard by a slightly more sophisticated L-reduction
again from \mis{3}.

\subsection{NP-hardness reduction from \mis{3} to \msr{3}}

Let $G$ be a graph of maximum degree $3$.
Let $n$ be the number of vertices in $G$.
Partition the edges of $G$ into two linear forests $E_1$ and $E_2$.
Let $V_1$ and $V_2$ be the vertices of $G$
that are \emph{not} incident to any edges in $E_1$ and $E_2$, respectively.
We construct three genomic maps $G_0$, $G_1$, and $G_2$,
where each map is a permutation of the following $4n$ distinct markers
all in positive orientation:
\begin{itemize}

\item
$n$ pairs of vertex markers
$\zl i$ and $\zr i$, $1 \le i \le n$;

\item
$n$ pairs of dummy markers
$\xl i$ and $\xr i$, $1 \le i \le n$.

\end{itemize}
$G_0$ consists of the $2n$ pairs of vertex and dummy markers
in an alternating pattern:
$$
\zl 1 \; \zr 1 \quad \xl 1 \; \xr 1
\quad \cdots \quad
\zl n \; \zr n \quad \xl n \; \xr n
$$
$G_1$ and $G_2$ are represented schematically as follows:
$$
\begin{array}{lccc}

G_1: & \langle V_1 \rangle & \langle E_1 \rangle & \langle D \rangle \\

G_2: & \langle D \rangle & \langle E_2 \rangle & \langle V_2 \rangle

\end{array}
$$
\begin{description}

\item
$\langle E_1 \rangle$ and $\langle E_2 \rangle$
consist of vertex markers of the vertices
incident to the edges in $E_1$ and $E_2$, respectively.
The markers of the vertices in each path $v_1 v_2 \ldots v_k$
are grouped together in an interleaving pattern:
for $1 \le i \le k$,
the left marker of $v_i$,
the right marker of $v_{i-1}$ (if $i > 1$),
the left marker of $v_{i+1}$ (if $i < k$),
and the right marker of $v_i$
are consecutive.

\item
$\langle V_1 \rangle$ and $\langle V_2 \rangle$
consist of vertex markers of the vertices in $V_1$ and $V_2$,
respectively.
The left marker and the right marker of each pair
are consecutive.

\item
$\langle D \rangle$ is the reverse permutation of
the $n$ pairs of dummy markers:

$$
\xl n \; \xr n
\quad \cdots \quad
\xl 1 \; \xr 1
$$

\end{description}
This completes the construction.
We refer to Figure~\ref{fig:msr3} (a) and (b) for an example.

\begin{figure}[htb]
\centering
\includegraphics{forests.eps}\\(a)

$$
\begin{array}{c}

\zl 1 \; \zr 1 \; \xl 1 \; \xr 1 \;
\zl 2 \; \zr 2 \; \xl 2 \; \xr 2 \;
\zl 3 \; \zr 3 \; \xl 3 \; \xr 3 \;
\zl 4 \; \zr 4 \; \xl 4 \; \xr 4 \;
\zl 5 \; \zr 5 \; \xl 5 \; \xr 5 \;
\zl 6 \; \zr 6 \; \xl 6 \; \xr 6 \;
\zl 7 \; \zr 7 \; \xl 7 \; \xr 7 \;
\zl 8 \; \zr 8 \; \xl 8 \; \xr 8 \;
\zl 9 \; \zr 9 \; \xl 9 \; \xr 9 \\

\zl 7 \; \zr 7 \;
\zl 8 \; \zr 8 \;
\zl 9 \; \zr 9 \quad

\zl 1 \;
\zl 2 \;
\zr 1 \;
\zl 3 \;
\zr 2 \;
\zl 4 \;
\zr 3 \;
\zl 5 \;
\zr 4 \;
\zl 6 \;
\zr 5 \;
\zr 6 \quad

\xl 9 \; \xr 9 \;
\xl 8 \; \xr 8 \;
\xl 7 \; \xr 7 \;
\xl 6 \; \xr 6 \;
\xl 5 \; \xr 5 \;
\xl 4 \; \xr 4 \;
\xl 3 \; \xr 3 \;
\xl 2 \; \xr 2 \;
\xl 1 \; \xr 1 \\

\xl 9 \; \xr 9 \;
\xl 8 \; \xr 8 \;
\xl 7 \; \xr 7 \;
\xl 6 \; \xr 6 \;
\xl 5 \; \xr 5 \;
\xl 4 \; \xr 4 \;
\xl 3 \; \xr 3 \;
\xl 2 \; \xr 2 \;
\xl 1 \; \xr 1 \quad

\zl 1 \;
\zl 7 \;
\zr 1 \;
\zl 8 \;
\zr 7 \;
\zl 3 \;
\zr 8 \;
\zr 3 \quad

\zl 4 \;
\zl 9 \;
\zr 4 \;
\zl 6 \;
\zr 9 \;
\zr 6 \quad

\zl 2 \; \zr 2 \;
\zl 5 \; \zr 5

\end{array}
$$
(b)

$$
\begin{array}{c}

\xl 1 \; \xr 1 \;
\zl 2 \; \zr 2 \; \xl 2 \; \xr 2 \;
\xl 3 \; \xr 3 \;
\zl 4 \; \zr 4 \; \xl 4 \; \xr 4 \;
\xl 5 \; \xr 5 \;
\zl 6 \; \zr 6 \; \xl 6 \; \xr 6 \;
\xl 7 \; \xr 7 \;
\zl 8 \; \zr 8 \; \xl 8 \; \xr 8 \;
\xl 9 \; \xr 9 \\

\zl 8 \; \zr 8 \;
\zl 2 \; \zr 2 \;
\zl 4 \; \zr 4 \;
\zl 6 \; \zr 6 \quad

\xl 9 \; \xr 9 \;
\xl 8 \; \xr 8 \;
\xl 7 \; \xr 7 \;
\xl 6 \; \xr 6 \;
\xl 5 \; \xr 5 \;
\xl 4 \; \xr 4 \;
\xl 3 \; \xr 3 \;
\xl 2 \; \xr 2 \;
\xl 1 \; \xr 1 \\

\xl 9 \; \xr 9 \;
\xl 8 \; \xr 8 \;
\xl 7 \; \xr 7 \;
\xl 6 \; \xr 6 \;
\xl 5 \; \xr 5 \;
\xl 4 \; \xr 4 \;
\xl 3 \; \xr 3 \;
\xl 2 \; \xr 2 \;
\xl 1 \; \xr 1 \quad

\zl 8 \; \zr 8 \;
\zl 4 \; \zr 4 \;
\zl 6 \; \zr 6 \;
\zl 2 \; \zr 2 \;

\end{array}
$$
(c)

\caption{\small%
(a) The graph $G$:
$E_1$ is a single (solid) path $\langle 1,2,3,4,5,6 \rangle$,
$E_2$ consists of two (dotted) paths
$\langle 1,7,8,3 \rangle$ and $\langle 4,9,6 \rangle$,
$V_1 = \{ 7,8,9 \}$,
$V_2 = \{ 2,5 \}$.
(b) The three genomic maps $G_0, G_1, G_2$.
(c) The three subsequences of the genomic maps
corresponding to the independent set $\{ 2,4,6,8 \}$ in the graph.}
\label{fig:msr3}
\end{figure}

It is clear that Proposition~\ref{prp:msr4} still holds.
The following lemma on canonical subsequences
is analogous to Lemma~\ref{lem:canon4}:

\begin{lemma}\label{lem:canon3}
If the three genomic maps $G_0, G_1, G_2$ have three subsequences
of total strip length $l$,
then they must have three subsequences of total strip length at least $l$
such that
\textup{(i)}
each strip is either a pair of vertex markers or a pair of dummy markers,
and
\textup{(ii)}
each pair of dummy markers is a strip.
\end{lemma}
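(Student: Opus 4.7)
My plan is to transform any feasible solution $(M,\{S_j\})$ of total strip length $l=|M|$ into a canonical one of length at least $l$. The first observation is that a canonical solution (satisfying (i) and (ii)) is completely characterized by a subset $I\subseteq V(G)$: it consists of all $n$ dummy pairs $\{\xl i,\xr i\}$ together with the vertex pair $\{\zl v,\zr v\}$ for each $v\in I$. Such a collection is a valid solution only if $I$ is an independent set in $G$; otherwise, for an edge $uv$ of $E_1$, the interleaving of $\langle E_1\rangle$ places $\zl v$ between $\zl u$ and $\zr u$ in $G_1$, so $\{\zl u,\zr u\}$ fails to be a strip (and symmetrically for $E_2$ in $G_2$). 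Hence a canonical solution has length $2|I|+2n$.

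Given this characterization, it suffices to prove the inequality $l\le 2\alpha(G)+2n$, where $\alpha(G)$ is the maximum independent set size: a maximum independent set $I$ then gives a canonical solution of length $2\alpha(G)+2n\ge l$. Since $M$ contains at most $2n$ dummy markers, I reduce the goal to the structural bound
\[
|M\cap\{\textup{vertex markers}\}|\le 2\alpha(G).
\]

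To establish this, I will examine the strip decomposition of $G'_1$ restricted to the block $\langle E_1\rangle$, path by path. For each path $P=v_1v_2\cdots v_k$ of $E_1$, the markers of $\{v_1,\ldots,v_k\}$ in $M$ appear in $G_1$ as a subsequence of the interleaved pattern
\[
\zl{v_1},\;\zl{v_2},\;\zr{v_1},\;\zl{v_3},\;\zr{v_2},\;\ldots,\;\zl{v_k},\;\zr{v_{k-1}},\;\zr{v_k},
\]
and the strips restricted to this subsequence are contiguous blocks of length at least $2$. A case analysis, cross-checked with the ascending order of $G_0$ and the path-interleaved order in $\langle E_2\rangle$ of $G_2$, will force the number of such markers to be at most $2\alpha(P)=2\lceil k/2\rceil$. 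Summing over all $E_1$-paths, and absorbing the isolated vertices of $V_1$ trivially (their markers lie in $\langle V_1\rangle$ and each pair is its own strip), together with the symmetric analysis through $G_2$, yields $|M\cap\{\textup{vertex markers}\}|\le 2\alpha(G)$.

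The main obstacle is this per-path bound, which requires one to rule out strip configurations that would squeeze in more than $2\lceil k/2\rceil$ markers of $V_P$. The subtlety comes from strips that cross block boundaries, namely $\langle V_1\rangle$–$\langle E_1\rangle$ and $\langle E_1\rangle$–$\langle D\rangle$, and mixed strips containing both vertex and dummy markers. These strips are tightly restricted by the simultaneous orderings in $G_0$, $G_1$, $G_2$ (in particular, the opposite orderings of $\langle D\rangle$ in $G_0$ versus $G_1,G_2$), and the core technical step is to show that whenever such a strip exists, one can re-bill its markers against an independent set of $G$ without overcounting. Making this charging argument tight is the crux of the proof.
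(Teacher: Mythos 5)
There is a genuine gap, and it sits exactly where you place the ``crux.'' Your plan reduces the lemma to the structural bound $|M\cap\{\text{vertex markers}\}|\le 2\alpha(G)$, but the aggregation step you sketch cannot deliver it: summing the per-path bound $2\lceil k/2\rceil$ over the paths of $E_1$ (plus the isolated vertices of $V_1$) bounds the count by twice the independence number of the linear \emph{forest} $E_1$, not of $G$, and the ``symmetric analysis through $G_2$'' only adds the analogous bound for $E_2$. The minimum of the two can strictly exceed $2\alpha(G)$: for a triangle with $E_1$ a two-edge path and $E_2$ a single edge, both forest bounds equal $4$ while $2\alpha(G)=2$. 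A bound of $2\alpha(G)$ is inherently a statement about the two forests jointly --- it amounts to extracting an independent set of $G$ of size half the surviving vertex-marker count, which is essentially the hard direction of Lemma~\ref{lem:iff3} --- so the charging argument you defer is not a technical detail but the entire content of the lemma. (A secondary point: the lemma is later invoked inside the polynomial-time function $g$ of the L-reduction, so producing the canonical solution by appealing to a maximum independent set would not serve that purpose even if the bound were established.)

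The paper's proof avoids global counting entirely via a local exchange. No strip can mix a dummy marker with a vertex marker (they occur in opposite orders in $G_1$ and $G_2$), and no strip can contain two dummy markers of different indices. If a strip contains vertex markers of two different indices, pick two such markers $\mu,\nu$ that are consecutive in the strip, with indices $i<j$; since $G_0$ alternates vertex pairs and dummy pairs in ascending index order, the dummy pair of index $i$ lies between $\mu$ and $\nu$ in $G_0$ and is therefore absent from the solution. Cutting the strip between $\mu$ and $\nu$ loses at most two markers (the possible lone leftovers), and inserting the missing dummy pair as a new strip regains exactly two; iterating gives canonical form in $O(n)$ steps without decreasing the total strip length. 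The forced absence of a dummy pair is the resource that pays for every cut --- that is the idea missing from your proposal.
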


\begin{proof}
We present an algorithm that transforms the subsequences into canonical form
without reducing the total strip length.
By construction,
a strip cannot include both a dummy marker and a vertex marker
because they appear in different orders in $G_1$ and in $G_2$,
and a strip cannot include two dummy markers of different indices
because they appear in different orders in $G_0$ and in $G_1$ and $G_2$.
Suppose that a strip $S$ consists of vertex markers
of two or more different indices.
Then there must be two vertex markers $\mu$ and $\nu$
of different indices $i$ and $j$ that are consecutive in $S$.
Since the vertex markers and the dummy markers appear in $G_0$
in an alternating pattern with ascending indices,
we must have $i < j$.
Moreover, the pair of dummy markers of index $i$,
which appears between $\mu$ and $\nu$ in $G_0$,
must be missing from the subsequences.
Now cut the strip $S$ into $S_\mu$ and $S_\nu$ between $\mu$ and $\nu$.
If $S_\mu$ (resp.\ $S_\nu$) consists of only one marker $\mu$ (resp.\ $\nu$),
delete the lone marker from the subsequences
(recall that a strip must include at least two markers).
This decreases the total strip length by at most two.
Next insert the pair of dummy markers of index $i$
to the subsequences as a new strip.
This increases the total strip length by exactly two.
Repeat this operation
whenever a strip contains two vertex markers of different indices
and whenever a pair of dummy markers is missing from the subsequences,
then in $O(n)$ steps we obtain three subsequences of total strip length
at least $l$ in canonical form.
\end{proof}

The following lemma, analogous to Lemma~\ref{lem:iff4},
establishes the NP-hardness of \msr{3}:

\begin{lemma}\label{lem:iff3}
The graph $G$ has an independent set of at least $k$ vertices
if and only if
the three genomic maps $G_0, G_1, G_2$ have three subsequences
whose total strip length $l$ is at least $2(n+k)$.
\end{lemma}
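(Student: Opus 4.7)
The plan is to mirror the proof of Lemma~\ref{lem:iff4}, but taking the extra $2n$ markers contributed by the $n$ dummy pairs into account. The target value $2(n+k)$ is explained by: each of the $n$ dummy pairs contributes a length-$2$ strip, for a total of $2n$; and a $k$-vertex independent set contributes $k$ pairwise non-intersecting vertex-marker pairs, each of which also forms a length-$2$ strip, for an additional $2k$.

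For the ``only if'' direction, suppose $G$ has an independent set $I$ of size at least $k$. I would form the three subsequences by keeping exactly the $n$ dummy pairs and the $k$ vertex pairs indexed by vertices of $I$, and by deleting every other marker. In $G_0$ the selected pairs are already laid out as consecutive $2n+2k$ markers in which the two markers of each kept pair are adjacent, so every kept pair is a strip. In the $\langle D\rangle$ block of $G_1$ and $G_2$ the $n$ dummy pairs are likewise adjacent pairs, so they remain strips. The $k$ selected vertex pairs lie in the $\langle V_1\rangle\cup\langle E_1\rangle$ part of $G_1$ and in the $\langle V_2\rangle\cup\langle E_2\rangle$ part of $G_2$; since $I$ is independent, Proposition~\ref{prp:msr4} gives that no two of these $k$ pairs intersect in $G_1$ or in $G_2$, so after deletion the two markers of each pair become consecutive. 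Hence each kept pair is a strip in all three subsequences, giving total strip length $2(n+k)$.

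For the ``if'' direction, suppose the three genomic maps admit three subsequences of total strip length $l\ge 2(n+k)$. By Lemma~\ref{lem:canon3}, I may replace them with three subsequences of at least the same total strip length in which (i) every strip is either a pair of vertex markers or a pair of dummy markers, and (ii) all $n$ dummy pairs appear as strips. The dummy strips account for exactly $2n$, so the vertex-marker pair strips must account for at least $2k$, yielding at least $k$ vertex pairs that are strips. Because each such pair forms a strip in $G_1$ and in $G_2$, no two of these pairs intersect there; so by Proposition~\ref{prp:msr4} the corresponding $k$ vertices are pairwise non-adjacent, i.e.\ form an independent set of size at least $k$, and the function $g$ that returns this set does the job.

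The only part that needs a little care is the forward direction in $G_1$ and $G_2$, where one has to confirm that deleting the unchosen vertex pairs from the interleaved $\langle E_1\rangle, \langle E_2\rangle$ blocks really leaves each selected pair contiguous; this is exactly what the non-intersection consequence of Proposition~\ref{prp:msr4} provides, and it is the same observation already used for Lemma~\ref{lem:iff4}. The rest is a routine bookkeeping of the $+2n$ offset contributed by the dummy pairs, so no new obstacle is expected beyond what Lemma~\ref{lem:canon3} already handles.
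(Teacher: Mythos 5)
Your proposal is correct and follows essentially the same route as the paper: the forward direction keeps the $n$ dummy pairs plus the $k$ independent-set vertex pairs and invokes Proposition~\ref{prp:msr4} for contiguity, and the reverse direction applies Lemma~\ref{lem:canon3} to reach canonical form, subtracts the $2n$ contributed by dummy strips, and converts the remaining $k$ vertex-pair strips back into an independent set via Proposition~\ref{prp:msr4}. Your explicit remark that condition (ii) of Lemma~\ref{lem:canon3} pins the dummy contribution at exactly $2n$ is a slightly more careful phrasing of the paper's ``excluding the $n$ pairs of dummy markers'' step, but the argument is the same.
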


\begin{proof}
We first prove the ``only if'' direction.
Suppose that
the graph $G$ has an independent set of at least $k$ vertices.
We will show that
the three genomic maps $G_0, G_1, G_2$
have three subsequences of total strip length at least $2(n+k)$.
By Proposition~\ref{prp:msr4},
the $k$ vertices in the independent set
correspond to $k$ pairs of vertex markers that do not intersect
each other in the genomic maps.
These $k$ pairs of vertex markers
together with the $n$ pairs of dummy markers
induce a subsequence of length $2(n+k)$ in each genomic map.
In each subsequence,
the left marker and the right marker of each pair appear consecutively
and compose a strip.
Thus the total strip length is at least $2(n+k)$.
We refer to Figure~\ref{fig:msr3}(c) for an example.

We next prove the ``if'' direction.
Suppose that
the three genomic maps $G_0, G_1, G_2$
have three subsequences of total strip length at least $2(n+k)$.
We will show that
the graph $G$ has an independent set of at least $k$ vertices.
By Lemma~\ref{lem:canon3},
the three genomic maps have three subsequences
of total strip length at least $2(n+k)$
such that each strip is a pair of markers.
Excluding the $n$ pairs of dummy markers,
we obtain at least $k$ pairs of vertex markers
that do not intersect each other in the genomic maps.
Then, by Proposition~\ref{prp:msr4},
the corresponding set of at least $k$ vertices in the graph $G$
form an independent set.
\end{proof}

\subsection{L-reduction from \mis{3} to \msr{3}}

We present an L-reduction $(f, g, \alpha, \beta)$ from \mis{3} to \msr{3}
as follows.
The function $f$,
given a graph $G$ of maximum degree $3$,
constructs the three genomic maps $G_0, G_1, G_2$
as in the NP-hardness reduction.
Let $k^*$ be the number of vertices in a maximum independent set in $G$,
and let $l^*$ be the maximum total strip length of any three subsequences of
$G_0, G_1, G_2$, respectively.
Since a simple greedy algorithm
(which repeatedly selects a vertex not adjacent to
the previously selected vertices)
finds an independent set of at least $n/(3+1)$ vertices
in the graph $G$ of maximum degree $3$,
we have $k^* \ge n/(3+1)$.
By Lemma~\ref{lem:iff3},
we have $l^*= 2(n+k^*)$.
It follows that
$$
l^*= 2(n+k^*) \le 2((3 + 1)k^*+k^*) = 2(3 + 2) k^* = 10 k^*.
$$
Choose $\alpha = 10$,
then property~\eqref{eq:L1} of L-reduction is satisfied.

The function $g$,
given three subsequences of the three genomic maps
$G_0, G_1, G_2$, respectively,
transforms the subsequences into canonical form
as in the proof of Lemma~\ref{lem:canon3},
then returns an independent set of vertices in the graph $G$
corresponding to
the pairs of vertex markers that are strips of the subsequences.
Let $l$ be the total strip length of the subsequences,
and let $k$ be the number of vertices in the independent set
returned by the function $g$.
Then $k \ge l/2 - n$.
It follows that
$$
|k^* - k| = k^* - k \le (l^*/2 - n) - (l/2 - n) = |l^* - l|/2.
$$
Choose $\beta = 1/2$,
then property~\eqref{eq:L2} of L-reduction is also satisfied.

We have obtained an L-reduction from \mis{3} to \msr{3}
with $\alpha\beta = 5$.
Chleb\'ik and Chleb\'ikov\'a~\cite{CC06} showed that
\mis{3} is NP-hard to approximate within
$1.010661 = \frac1{1 - (1 - 1/1.010661)}$.
It follows that \msr{3} is NP-hard to approximate within
$\frac1{1 - (1 - 1/1.010661)/5} = 1.002114\ldots$.

\section{\msr{2} is APX-hard}
\label{sec:msr2}

In this section,
we prove that \msr{2} is APX-hard by an L-reduction from \sat{p}{q}
with $p = 3$ and $q \ge 2$.

\subsection{NP-hardness reduction from \sat{p}{q} to \msr{2}}

Let $(X,\C)$ be an instance of \sat{p}{q},
where $X$ is a set of $n$ variables $x_i$, $1 \le i \le n$,
and $\C$ is a set of $m$ clauses $C_j$, $1 \le j \le m$.
Without loss of generality, assume that the $p$ literals of each variable
are neither all positive nor all negative.
Since $p = 3$, it follows that
each variable has either $2$ positive and $1$ negative literals,
or $1$ positive and $2$ negative literals.

We construct two genomic maps $G_1$ and $G_2$,
each map a permutation of $2(5n + m + qm + 2)$ distinct markers
all in positive orientation:
\begin{itemize}

\item
$1$ pair of variable markers $\vl{i}\;\vr{i}$
for each variable $x_i$, $1 \le i \le n$;

\item
$2$ pairs of true markers
$\tl{i,1}\;\tr{i,1}$ and $\tl{i,2}\;\tr{i,2}$
for each variable $x_i$, $1 \le i \le n$;

\item
$2$ pairs of false markers
$\fl{i,1}\;\fr{i,1}$ and $\fl{i,2}\;\fr{i,2}$
for each variable $x_i$, $1 \le i \le n$;

\item
$1$ pair of clause markers $\yl{j}\;\yr{j}$
for each clause $C_j$, $1 \le j \le m$;

\item
$q$ pairs of literal markers
$\zl{j,t}\;\zr{j,t}$, $1 \le t \le q$,
for each clause $C_j$, $1 \le j \le m$;

\item
$2$ pairs of dummy markers $\xl1\;\xr1$ and $\xl2\;\xr2$.

\end{itemize}
 
The construction is done in two steps:
first arrange the variable markers, the true/false markers,
the clause markers, and the dummy markers
into two sequences $\check G_1$ and $\check G_2$,
next insert the literal markers at appropriate positions in the two sequences
to obtain the two genomic maps $G_1$ and $G_2$.

The two sequences $\check G_1$ and $\check G_2$ are represented
schematically as follows:
$$
\begin{array}{ll}

\check G_1:
&
\quad \langle x_1 \rangle \quad \cdots \quad \langle x_n \rangle
\quad
\quad \xl 1 \quad \xr 1 \quad \xl 2 \quad \xr 2
\quad
\quad \yl 1 \quad \yr 1 \quad \cdots \quad \yl m \quad \yr m
\quad
\quad \vl 1 \quad \vr 1 \quad \cdots \quad \vl n \quad \vr n \\

\check G_2:
&
\quad \langle x_n \rangle \quad \cdots \quad \langle x_1 \rangle
\quad
\quad \yl m \quad \yr m \quad \cdots \quad \yl 1 \quad \yr 1
\quad
\quad \xl 2 \quad \xr 2 \quad \xl 1 \quad \xr 1

\end{array}
$$
For each variable $x_i$,
$\langle x_i \rangle$
consists of the corresponding four pairs of true/false markers
$\tl{i,1}\;\tr{i,1}$ $\tl{i,2}\;\tr{i,2}$
$\fl{i,1}\;\fr{i,1}$ $\fl{i,2}\;\fr{i,2}$
in $\check G_1$ and $\check G_2$,
and in addition the pair of variable markers $\vl{i}\;\vr{i}$ in $\check G_2$.
These markers are arranged in the two sequences in a special pattern as follows
(the indices $i$ are omitted for simpler notations):
$$
\begin{array}{c}

\fl{1} \quad \tl{2} \quad \fr{1} \quad \tr{2}
\quad\quad
\fl{2} \quad \tl{1} \quad \fr{2} \quad \tr{1} \\

\tl{1} \quad \fl{1} \quad \tr{1} \quad \fr{1}
\quad \vl{} \quad \vr{} \quad
\fl{2} \quad \tl{2} \quad \fr{2} \quad \tr{2}

\end{array}
$$

Now insert the literal markers to the two sequences
$\check G_1$ and $\check G_2$ to obtain the two genomic maps $G_1$ and $G_2$.
First, $\check G_1 \to G_1$.
For each positive literal (resp.\ negative literal) of a variable $x_i$
that occurs in a clause $C_j$,
place a pair of literal markers $\zl{j,t}\;\zr{j,t}$, $1 \le t \le q$,
around a false marker $\fl{i,s}$ (resp.\ true marker $\tr{i,s}$),
$1 \le s \le 2$.
The four possible positions of the three pairs of literal markers
of each variable $x_i$ are as follows:
$$
\begin{array}{c}

\zl{}\fl{1}\zr{} \quad \tl{2} \quad \fr{1} \quad \zl{}\tr{2}\zr{}
\quad\quad
\zl{}\fl{2}\zr{} \quad \tl{1} \quad \fr{2} \quad \zl{}\tr{1}\zr{} \\

\tl{1} \quad \fl{1} \quad \tr{1} \quad \fr{1}
\quad \vl{} \quad \vr{} \quad
\fl{2} \quad \tl{2} \quad \fr{2} \quad \tr{2}

\end{array}
$$

Next, $\check G_2 \to G_2$.
Without loss of generality, assume that
the $q$ pairs of literal markers of each clause $C_j$
appear in $G_1$ with ascending indices:
$$
\zl{j,1} \quad \zr{j,1} \quad \cdots \quad \zl{j,q} \quad \zr{j,q}
$$
Insert the $q$ pairs of literal markers in $G_2$
immediately after the pair of clause markers $\yl{j}\;\yr{j}$,
in an interleaving pattern:
$$
\zl{j,q} \quad \cdots \quad \zl{j,1} \quad \zr{j,q} \quad \cdots \quad \zr{j,1}
$$
This completes the construction.
We refer to Figure~\ref{fig:msr2} (a) and (b) for an example of the two steps.

\begin{figure}[htb]
\centering

\resizebox{\linewidth}{!}{\parbox[c]{\linewidth}{$$\begin{array}{c}

\fl{1,1}\xs\tl{1,2}\xs\fr{1,1}\xs\tr{1,2}\xs
\fl{1,2}\xs\tl{1,1}\xs\fr{1,2}\xs\tr{1,1}\xs
\fl{2,1}\xs\tl{2,2}\xs\fr{2,1}\xs\tr{2,2}\xs
\fl{2,2}\xs\tl{2,1}\xs\fr{2,2}\xs\tr{2,1}\xs
\xl1\xs\xr1\xs \xl2\xs\xr2\xs
\yl1\xs\yr1\xs \yl2\xs\yr2\xs \yl3\xs\yr3\xs
\vl1\xs\vr1\xs \vl2\xs\vr2 \\

\tl{2,1}\xs\fl{2,1}\xs\tr{2,1}\xs\fr{2,1}\xs
\vl{2}\xs\vr{2}\xs
\fl{2,2}\xs\tl{2,2}\xs\fr{2,2}\xs\tr{2,2}\xs
\tl{1,1}\xs\fl{1,1}\xs\tr{1,1}\xs\fr{1,1}\xs
\vl{1}\xs\vr{1}\xs
\fl{1,2}\xs\tl{1,2}\xs\fr{1,2}\xs\tr{1,2}\xs
\yl3\xs\yr3\xs \yl2\xs\yr2\xs \yl1\xs\yr1\xs
\xl2\xs\xr2\xs \xl1\xs\xr1

\end{array}$$}}\\(a)

\resizebox{\linewidth}{!}{\parbox[c]{1.43\linewidth}{$$\begin{array}{c}

\zl{1,1}\xs\fl{1,1}\xs\zr{1,1}\xs\tl{1,2}\xs\fr{1,1}\xs\zl{3,1}\xs\tr{1,2}\xs\zr{3,1}\xs
\zl{2,1}\xs\fl{1,2}\xs\zr{2,1}\xs\tl{1,1}\xs\fr{1,2}\xs\tr{1,1}\xs
\zl{1,2}\xs\fl{2,1}\xs\zr{1,2}\xs\tl{2,2}\xs\fr{2,1}\xs\zl{2,2}\xs\tr{2,2}\xs\zr{2,2}\xs
\fl{2,2}\xs\tl{2,1}\xs\fr{2,2}\xs\zl{3,2}\xs\tr{2,1}\xs\zr{3,2}\xs
\xl1\xs\xr1\xs \xl2\xs\xr2\xs
\yl1\xs\yr1\xs \yl2\xs\yr2\xs \yl3\xs\yr3\xs
\vl1\xs\vr1\xs \vl2\xs\vr2 \\

\tl{2,1}\xs\fl{2,1}\xs\tr{2,1}\xs\fr{2,1}\xs
\vl{2}\xs\vr{2}\xs
\fl{2,2}\xs\tl{2,2}\xs\fr{2,2}\xs\tr{2,2}\xs
\tl{1,1}\xs\fl{1,1}\xs\tr{1,1}\xs\fr{1,1}\xs
\vl{1}\xs\vr{1}\xs
\fl{1,2}\xs\tl{1,2}\xs\fr{1,2}\xs\tr{1,2}\xs
\yl3\xs\yr3\xs
\zl{3,2}\xs\zl{3,1}\xs\zr{3,2}\xs\zr{3,1}\xs
\yl2\xs\yr2\xs
\zl{2,2}\xs\zl{2,1}\xs\zr{2,2}\xs\zr{2,1}\xs
\yl1\xs\yr1\xs
\zl{1,2}\xs\zl{1,1}\xs\zr{1,2}\xs\zr{1,1}\xs
\xl2\xs\xr2\xs \xl1\xs\xr1

\end{array}$$}}\\(b)

\resizebox{\linewidth}{!}{\parbox[c]{\linewidth}{$$\begin{array}{c}

\zl{1,1}\xs\zr{1,1}\xs\tl{1,2}\xs\tr{1,2}\xs
\zl{2,1}\xs\zr{2,1}\xs\tl{1,1}\xs\tr{1,1}\xs
\fl{2,1}\xs\fr{2,1}\xs
\fl{2,2}\xs\fr{2,2}\xs\zl{3,2}\xs\zr{3,2}\xs
\xl1\xs\xr1\xs \xl2\xs\xr2\xs
\yl1\xs\yr1\xs \yl2\xs\yr2\xs \yl3\xs\yr3\xs
\vl1\xs\vr1\xs \vl2\xs\vr2 \\

\fl{2,1}\xs\fr{2,1}\xs
\vl{2}\xs\vr{2}\xs
\fl{2,2}\xs\fr{2,2}\xs
\tl{1,1}\xs\tr{1,1}\xs
\vl{1}\xs\vr{1}\xs
\tl{1,2}\xs\tr{1,2}\xs
\yl3\xs\yr3\xs
\zl{3,2}\xs\zr{3,2}\xs
\yl2\xs\yr2\xs
\zl{2,1}\xs\zr{2,1}\xs
\yl1\xs\yr1\xs
\zl{1,1}\xs\zr{1,1}\xs
\xl2\xs\xr2\xs \xl1\xs\xr1

\end{array}$$}}\\(c)

\resizebox{\linewidth}{!}{\parbox[c]{\linewidth}{$$\begin{array}{c}

\zl{1,1}\xs\zr{1,1}\xs\tl{1,2}\xs\tr{1,2}\xs
\tl{1,1}\xs\tr{1,1}\xs
\fl{2,1}\xs\fr{2,1}\xs\zl{2,2}\xs\zr{2,2}\xs
\fl{2,2}\xs\fr{2,2}\xs\zl{3,2}\xs\zr{3,2}\xs
\xl1\xs\xr1\xs \xl2\xs\xr2\xs
\yl1\xs\yr1\xs \yl2\xs\yr2\xs \yl3\xs\yr3\xs
\vl1\xs\vr1\xs \vl2\xs\vr2 \\

\fl{2,1}\xs\fr{2,1}\xs
\vl{2}\xs\vr{2}\xs
\fl{2,2}\xs\fr{2,2}\xs
\tl{1,1}\xs\tr{1,1}\xs
\vl{1}\xs\vr{1}\xs
\tl{1,2}\xs\tr{1,2}\xs
\yl3\xs\yr3\xs
\zl{3,2}\xs\zr{3,2}\xs
\yl2\xs\yr2\xs
\zl{2,2}\xs\zr{2,2}\xs
\yl1\xs\yr1\xs
\zl{1,1}\xs\zr{1,1}\xs
\xl2\xs\xr2\xs \xl1\xs\xr1

\end{array}$$}}\\(d)

\caption{\small%
\msr{2} construction for the \sat{3}{2} instance
$C_1 = x_1 \lor x_2$, $C_2 = x_1 \lor \bar x_2$,
and $C_3 = \bar x_1 \lor \bar x_2$.
(a)
The two sequences $\check G_1$ and $\check G_2$.
(b)
The two genomic maps $G_1$ and $G_2$.
(c) Two canonical subsequences
for the assignment $x_1 = \mathit{true}$ and $x_2 = \mathit{false}$.
(d) Two other canonical subsequences
for the assignment $x_1 = \mathit{true}$ and $x_2 = \mathit{false}$.
}
\label{fig:msr2}
\end{figure}

We say that
two subsequences of the two genomic maps $G_1$ and $G_2$
are \emph{canonical} if
each strip of the two subsequences is a pair of markers.
We refer to Figure~\ref{fig:msr2} (c) and (d)
for two examples of canonical subsequences.
The following lemma on canonical subsequences
is analogous to Lemma~\ref{lem:canon4} and Lemma~\ref{lem:canon3}:

\begin{lemma}\label{lem:canon2}
If the two genomic maps $G_1$ and $G_2$ have
two subsequences of total strip length $l$,
then they must have
two subsequences of total strip length at least $l$
such that each strip is a pair of markers and, moreover,
\textup{(i)}
the two pairs of dummy markers are two strips,
\textup{(ii)}
the $m$ pairs of clause markers and the $n$ pairs of variable markers are $m+n$ strips,
\textup{(iii)}
at most one pair of literal markers of each clause is a strip,
\textup{(iv)}
either both pairs of true markers or both pairs of false markers of each variable are two strips.
\end{lemma}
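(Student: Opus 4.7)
The plan is to follow the template of the proofs of Lemmas~\ref{lem:canon4} and~\ref{lem:canon3}: I will describe an algorithm that locally transforms any two subsequences of $G_1$ and $G_2$ of total strip length $l$ into canonical subsequences of total strip length at least $l$ satisfying (i)--(iv). The central structural observation, which drives everything else, is that a large number of marker pairs appear in opposite relative orders in $G_1$ versus $G_2$, and such a pair can never belong to a common strip in either orientation; this is what constrains the possible strip shapes and justifies each local replacement.

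The first step is to show that every strip has length exactly two and consists either of one of the canonical pairs $\xl s\xr s$, $\yl j\yr j$, $\vl i\vr i$, $\tl{i,s}\tr{i,s}$, $\fl{i,s}\fr{i,s}$, $\zl{j,t}\zr{j,t}$, or of a ``stray'' pair that can always be replaced by canonical pairs of equal or greater total length. For length, I check that no three-marker substring of $G_1$ appears as a substring of $G_2$ in either orientation (an inspection of the schematics suffices). For pair types, I enumerate the pairs of markers whose relative order agrees between $G_1$ and $G_2$ (only these can possibly be strips): everything else is ruled out by order reversal. Each surviving non-canonical pair turns out to be replaceable: for instance, if $\xl 1\xl 2$ is a strip, which requires $\xr 1,\xr 2$ absent from both subsequences, then replacing it by the two canonical pairs $\xl 1\xr 1$ and $\xl 2\xr 2$ doubles the length, and the other stray cases are analogous.

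The transformation then runs in three stages. \emph{Stage A} handles (i) and (ii): I replace every stray pair by the appropriate canonical pairs, and insert as new strips any still-missing dummy pair $\xl s\xr s$, clause pair $\yl j\yr j$, or variable pair $\vl i\vr i$. Each such insertion is feasible because these pairs sit in consecutive positions in both $G_1$ and $G_2$, and the enclosing markers do not participate in any other canonical strip. \emph{Stage B} handles (iii): the literal markers of clause $C_j$ appear in $G_2$ in the interleaving order $\zl{j,q}\xs\cdots\xs\zl{j,1}\xs\zr{j,q}\xs\cdots\xs\zr{j,1}$, so any two simultaneous literal-pair strips $\zl{j,s}\zr{j,s}$ and $\zl{j,t}\zr{j,t}$ of the same clause would split each other in $G_2$; property (iii) therefore holds automatically. \emph{Stage C} handles (iv): for each variable $x_i$, a direct check on the four candidate pairs $\tl{i,s}\tr{i,s}$ and $\fl{i,s}\fr{i,s}$, using the interleaving inside the block $\langle x_i\rangle$ (in $G_1$, $\fl{i,1}\xs\tl{i,2}\xs\fr{i,1}\xs\tr{i,2}\xs\fl{i,2}\xs\tl{i,1}\xs\fr{i,2}\xs\tr{i,1}$, versus the grouped arrangement in $G_2$), shows that a true pair of $x_i$ can never coexist as a strip with a false pair of $x_i$; whichever side survives is then completed by inserting its missing companion pair, which can only raise the total strip length.

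The principal obstacle is the book-keeping in Stage A: I must verify that every stray two-marker strip admits a canonical replacement that preserves or increases the total strip length and remains consistent with any other strips already present, and that the dummy, clause, and variable pairs inserted at the end of Stage A do not accidentally break existing canonical strips elsewhere. Once Stage A is complete, Stages B and C reduce to local inspection inside each clause and each block $\langle x_i\rangle$, and together the three stages deliver the desired canonical subsequences of total strip length at least~$l$.
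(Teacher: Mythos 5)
Your overall strategy---a local transformation algorithm in the spirit of the proofs of Lemmas~\ref{lem:canon4} and~\ref{lem:canon3}---matches the paper's, but your first step contains a genuine error that undermines Stage~A. A strip must appear contiguously in the chosen \emph{subsequences}, not in $G_1$ and $G_2$ themselves, so checking that no three-marker substring of $G_1$ occurs as a substring of $G_2$ establishes nothing about strip lengths: the relevant question is whether three markers can occur in the same relative order in both maps, and they can. For instance $\fl{1,1}$, $\xl1$, $\xr1$ occur in that order in both $G_1$ and $G_2$, so a (bad) pair of subsequences may contain the length-three strip $\fl{1,1}\,\xl1\,\xr1$; likewise $\fl{i,1}\,\tr{i,1}\,\xl1$ and many others. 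Hence ``every strip is a pair'' is false for arbitrary input subsequences and cannot serve as the starting point. The paper instead \emph{cuts} such strips apart, accepts a loss of up to two markers from the resulting lone fragments, and compensates by \emph{inserting} a marker pair that is provably missing from the subsequences (e.g.\ the pair $\xl2\,\xr2$, which lies between $\fl{1,1}$ and $\xl1$ in $G_2$ and therefore must have been deleted). This cut--delete--insert accounting, together with the \emph{shift} operations needed for stray pairs such as $\zl{j,s}\,\zr{j,t}$ with $s<t$ and $\fl{i,1}\,\tr{i,1}$, is the missing mechanism; without it Stage~A cannot guarantee that the total strip length does not decrease.

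Relatedly, your enumeration of stray pairs is both incomplete and partly wrong: $\xl1\,\xl2$ is not a candidate strip at all (the two dummy pairs occur in opposite orders in $G_1$ and $G_2$), whereas genuinely dangerous order-agreeing pairs such as a true/false or literal marker followed by a dummy marker, two literal markers of different clauses, or a true/false marker followed by a clause marker are not mentioned. The paper tames this zoo by imposing its conditions \emph{progressively}---first the dummy strips, whose presence in the subsequences then forbids any strip from crossing the dummy block in $G_1$, then the clause and variable strips, which forbid strips mixing literal markers of different clauses, and so on---and ``the other stray cases are analogous'' conceals exactly this structure, which is the technical heart of the proof. Finally, in Stage~C, inserting the companion true/false pair may force the deletion of a conflicting literal-pair strip (a net change of zero, not an increase as you claim), and the case in which \emph{no} true/false pair of $x_i$ survives is not covered by ``whichever side survives''; the paper handles it by deleting the minority-sign literal pair $T$ and inserting both pairs of the opposite polarity, and some care in choosing the polarity is needed to avoid a net loss.
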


\begin{proof}
We present an algorithm that transforms the subsequences into canonical form
without reducing the total strip length.
The algorithm performs incremental operations on the subsequences
such that the following eight conditions are satisfied progressively:

\textbf{\em 1. Each strip that includes a dummy marker is a pair of dummy markers.}
A strip cannot include two dummy markers of different indices
because they appear in different orders in $G_1$ and in $G_2$.
Note that in $G_2$ the dummy markers appear after the other markers.
Suppose that a strip $S$ includes both a dummy marker and a non-dummy marker.
Then there must be a non-dummy marker $\mu$ and a dummy marker $\nu$
consecutive in $S$.
Since the two pairs of dummy markers
appear consecutively but in different orders in $G_1$ and in $G_2$,
one of the two pairs must appear between $\mu$ and $\nu$
either in $G_1$ or in $G_2$.
This pair is hence missing from the subsequences.
Now cut the strip $S$ into $S_\mu$ and $S_\nu$ between $\mu$ and $\nu$.
If $S_\mu$ (resp.\ $S_\nu$) consists of only one marker $\mu$ (resp.\ $\nu$),
delete the lone marker from the subsequences
(recall that a strip must include at least two markers).
This decreases the total strip length by at most two.
Next insert the missing pair of dummy markers
to the subsequences.
This pair of dummy markers becomes either a new strip by itself,
or part of a longer strip
(recall that a strip must be maximal).
In any case, the insertion increases the total strip length by exactly two.
Overall, this \emph{cut-delete-insert} operation
(also used in Lemma~\ref{lem:canon3})
does not reduce the total strip length.
After the first operation,
a second operation may be necessary.
But since each operation here deletes only lone markers
(in $S_\mu$ and $S_\nu$) and inserts always a pair of markers,
the pair inserted by one operation
is never deleted by a subsequent operation.
Thus at most two operations are sufficient to transform the subsequences
until each strip that includes a dummy marker is indeed a pair of dummy markers.

\textbf{\em 2. The two pairs of dummy markers are two strips.}
Suppose that the subsequences do not have both pairs of dummy markers as strips.
Then, by condition~1,
we must have either both pairs of dummy markers missing from the subsequences,
or one pair missing and the other pair forming a strip.
Note that in $G_1$ the dummy markers separate
the true/false and literal markers on the left
from the clause and variable markers on the right,
and that in $G_2$ the dummy markers appear after the other markers.
If the missing dummy markers do not disrupt any existing strips in $G_1$,
then simply insert each missing pair to the subsequences as a new strip.
Otherwise, there must be a true/false or literal marker $\mu$
and a clause or variable marker $\nu$ consecutive in a strip $S$,
such that both pairs of dummy markers appear in $G_1$ between $\mu$ and $\nu$
and hence are missing from the subsequences.
Cut the strip $S$ between $\mu$ and $\nu$,
delete any lone markers if necessary,
then insert the two pairs of dummy markers to the subsequences
as two new strips.

\textbf{\em 3. Each strip that includes a clause or variable marker is a pair of clause markers or a pair of variable markers.}
Note that in $G_1$ the clause and variable markers
are separated by the dummy markers from the other markers.
Thus, by condition~2,
a strip that includes a clause or variable marker
cannot include any markers of the other types.
Also, a strip cannot include two clause markers of different clauses,
or two variable markers of different variables,
or a clause marker and a variable marker,
because these combinations appear in different orders in $G_1$ and in $G_2$.
Thus this condition is automatically satisfied after conditions 1 and 2.

\textbf{\em 4. The $m$ pairs of clause markers and the $n$ pairs of variable markers are $m+n$ strips.}
Suppose that the subsequences do not have
all $m+n$ pairs of clause and variable markers as $m+n$ strips.
By condition~3,
the clause and variable markers in the subsequences must be in pairs,
each pair forming a strip.
Then the clause and variable markers missing from the subsequences
must be in pairs too.
For each missing pair of clause or variable markers,
if the pair does not disrupt any existing strips in $G_2$,
then simply insert it to the subsequences as a new strip.
Otherwise, there must be two true/false or literal markers $\mu$ and $\nu$
consecutive in a strip $S$,
such that the missing pair appears in $G_2$ between $\mu$ and $\nu$.
Cut the strip $S$ between $\mu$ and $\nu$,
delete any lone markers if necessary,
then insert each missing pair of clause markers
between $\mu$ and $\nu$
to the subsequences as a new strip.

\textbf{\em 5. Each strip that includes a literal marker is a pair of literal markers.}
Note that in $G_2$ the dummy and clause markers separate
the literals markers from the other markers,
and separate the literal markers of different clauses from each other.
Thus, by conditions 2 and 4,
a strip cannot include both a literal marker and a non-literal marker,
or two literal markers of different clauses.
Suppose that a strip $S$ includes two
literal markers $\mu$ and $\nu$ of the same clause $C_j$
but of different indices $j,s$ and $j,t$.
Assume without loss of generality that $\mu$ and $\nu$ are consecutive in $S$.
Recall the orders of the literal markers of each clause
in the two genomic maps:

$$
\begin{array}{c}

\zl{j,1} \quad \zr{j,1}
\quad\cdots\quad
\underline{\zl{j,s} \quad \zr{j,s} \quad\cdots\quad \zl{j,t} \quad \zr{j,t}}
\quad\cdots\quad
\zl{j,q} \quad \zr{j,q} \\

\zl{j,q}
\quad\cdots\quad
\underline{\zl{j,t} \quad\cdots\quad \zl{j,s}}
\quad\cdots\quad
\zl{j,1} \quad \zr{j,q}
\quad\cdots\quad
\underline{\zr{j,t} \quad\cdots\quad \zr{j,s}}
\quad\cdots\quad
\zr{j,1}

\end{array}
$$

Since in $G_1$ the pairs of literal markers appear with ascending indices,
the index $s$ of the marker $\mu$ must be less than
the index $t$ of the marker $\nu$.
Then, since in $G_2$ the left markers appear with descending indices
before the right markers also with descending indices,
$\mu$ must be a left marker, and $\nu$ must be a right marker.
That is, $\mu\nu = \;\zl{j,s}\;\zr{j,t}$.
All markers between $\mu$ and $\nu$ in $G_1$
must be missing from the subsequences.
Among these missing markers,
those that are literal markers of $C_j$
appear in $G_2$ either consecutively before $\mu$ or consecutively after $\nu$.
Replace either $\mu$ or $\nu$ by a missing literal marker of $C_j$,
that is, either $\zl{j,s}$ by $\zl{j,t}$, or $\zr{j,t}$ by $\zr{j,s}$,
then $\mu$ and $\nu$ become a pair.
Denote this \emph{shift} operation by
$$
\mu\nu:\quad
\zl{j,s}\;\zr{j,t} \;\to\; \zl{j,t}\;\zr{j,t} \textup{ or } \zl{j,s}\;\zr{j,s}.
$$
The strip $S$ cannot include any other literal markers of the clause $C_j$
besides $\mu$ and $\nu$ because
(i)
the markers before $\zl{j,s}$ in $G_1$ appear after $\zl{j,s}$ in $G_2$,
and (ii)
the markers after $\zr{j,t}$ in $G_1$ appear before $\zr{j,t}$ in $G_2$.

\textbf{\em 6. At most one pair of literal markers of each clause is a strip.}
Note that the $q$ pairs of literal markers of each clause appear in $G_2$
in an interleaving pattern.
It follows by condition~5 that at most one of the $q$ pairs can be a strip.

\textbf{\em 7. Each strip that includes a true/false marker is a pair of true markers or a pair of false markers.}
By conditions 1, 3, and 5,
it follows that each strip that includes a true/false marker
must include true/false markers only.
A strip cannot include two true/false markers of different variables
because they appear in different orders in $G_1$ and in $G_2$.
Suppose that a strip $S$ includes two true/false markers $\mu$ and $\nu$
of the same variable $x_i$ such that $\mu$ and $\nu$ are not a pair.
Recall the orders of the four pairs of true/false markers of each variable $x_i$
in $G_1$ and $G_2$,
the four possible positions
of the three pairs of literal markers in $G_1$,
and the position of the variable marker in $G_2$:

$$
\begin{array}{c}

\zl{}\fl{1}\zr{} \quad \tl{2} \quad \fr{1} \quad \zl{}\tr{2}\zr{}
\quad\quad
\zl{}\fl{2}\zr{} \quad \tl{1} \quad \fr{2} \quad \zl{}\tr{1}\zr{} \\

\tl{1} \quad \fl{1} \quad \tr{1} \quad \fr{1}
\quad \vl{} \quad \vr{} \quad
\fl{2} \quad \tl{2} \quad \fr{2} \quad \tr{2}

\end{array}
$$

Note that the pair of variable markers in $G_2$ forbids a strip
from including two true/false markers of different indices.
Thus the strip $S$ must consist of true/false markers of both the same variable
and the same index.
Assume without loss of generality that $\mu$ appears before $\nu$ in $S$.
It is easy to check that there are only two such combinations
of $\mu$ and $\nu$:
either $\mu\nu = \;\fl{1}\;\tr{1}$ or $\mu\nu = \;\tl{2}\;\fr{2}$.
Moreover, the strip $S$ must include only the two markers $\mu$ and $\nu$.
For either combination of $\mu$ and $\nu$,
use a shift operation to make $\mu$ and $\nu$ a pair:
$$
\begin{array}{ll}

\mu\nu:\quad&
\fl{1}\;\tr{1} \;\to\; \fl{1}\;\fr{1} \textup{ or } \tl{1}\;\tr{1} \\

\mu\nu:\quad&
\tl{2}\;\fr{2} \;\to\; \tl{2}\;\tr{2} \textup{ or } \fl{2}\;\fr{2}.

\end{array}
$$

\textbf{\em 8. Either both pairs of true markers or both pairs of false markers of each variable are two strips.}
Consider the conflict graph of
the four pairs of true/false markers and the three pairs of literal markers
of each variable $x_i$ in Figure~\ref{fig:7}.
The graph has one vertex for each pair,
and has an edge between two vertices if and only if
the corresponding pairs intersect in either $G_1$ or $G_2$.
By conditions 1, 3, 5, and 7,
the strips of the subsequences from the seven pairs
correspond to an independent set in the conflict graph of seven vertices.

\begin{figure}[htb]
\psfrag{S}{$S$}
\psfrag{T}{$T$}
\psfrag{>}{$\Rightarrow$}
\centering
\includegraphics[scale=0.6]{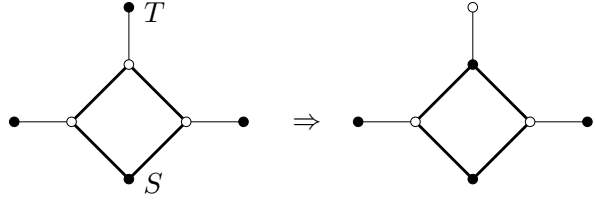}
\caption{\small%
Replacing vertices of the independent set in the conflict graph
of the four pairs of true/false markers
and the three pairs of literal markers of each variable.
Vertices in the independent set are black.
Edges in the $4$-cycle are thick.
In this example the strip $S$ is first deleted then inserted back.}
\label{fig:7}
\end{figure}

Note that the four vertices
corresponding to the four pairs of true/false markers
induce a $4$-cycle in the conflict graph.
Suppose that neither both pairs of true markers
nor both pairs of false markers are strips.
Then at most one of the four pairs, say $S$, is a strip.
Delete $S$ from the subsequences.
Recall that each variable has either $2$ positive and $1$ negative literals,
or $1$ positive and $2$ negative literals.
Let $T$ be the pair of literal markers whose sign is opposite to the sign
of the other two pairs of literal markers.
Also delete $T$ from the subsequences if it is there.
Next insert two pairs of true/false markers to the subsequences:
if $T$ is positive,
both pairs of false markers
$\fl{i,1}\;\fr{i,1}$ and $\fl{i,2}\;\fr{i,2}$;
if $T$ is negative,
both pairs of true markers
$\tl{i,1}\;\tr{i,1}$ and $\tl{i,2}\;\tr{i,2}$.

When all eight conditions are satisfied,
the subsequences are in the desired canonical form.
\end{proof}

The following lemma, analogous to
Lemma~\ref{lem:iff4} and Lemma~\ref{lem:iff3},
establishes the NP-hardness of \msr{2}:

\begin{lemma}\label{lem:iff2}
The variables in $X$
have an assignment that satisfies at least $k$ clauses in $\C$
if and only if
the two genomic maps $G_1$ and $G_2$ have two subsequences
whose total strip length $l$ is at least $2(3n + m + k + 2)$.
\end{lemma}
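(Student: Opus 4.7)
By Lemma~\ref{lem:canon2}, it suffices to work with canonical subsequences, in which every strip is a pair of markers. In such subsequences the two dummy pairs (condition~(i)), the $m$ clause pairs and $n$ variable pairs (condition~(ii)), and the two of the four true/false pairs of each variable supplied by condition~(iv) are all strips, contributing $4 + 2m + 2n + 4n = 6n+2m+4$ markers in total. Letting $k'$ denote the number of literal-pair strips (bounded by one per clause, condition~(iii)), we thus have $l = 6n+2m+4 + 2k'$, so the inequality $l \ge 2(3n+m+k+2)$ is exactly equivalent to $k' \ge k$. The plan is then to match literal-pair strips with satisfied clauses in both directions.

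For the ``only if'' direction, given an assignment $X$ satisfying at least $k$ clauses, I would construct canonical subsequences explicitly: include every dummy, clause, and variable marker; for each variable $x_i$, include both pairs of true markers if $x_i$ is true and both pairs of false markers if $x_i$ is false; and for each satisfied clause $C_j$, pick a satisfying literal of some variable $x_i$ and include the corresponding literal pair $\zl{j,t}\;\zr{j,t}$. Since a positive literal's pair brackets a false marker $\fl{i,s}$ in $G_1$ while a negative literal's pair brackets a true marker $\tr{i,s}$, the bracketed marker is excluded by our assignment choice, so the literal pair is contiguous in $G'_1$. In $G_2$ only one literal pair per clause is selected, so the reverse-interleaving pattern of the $q$ pairs collapses to that single contiguous pair. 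A direct inspection then shows that the resulting subsequences decompose precisely into the prescribed pair strips, yielding total strip length $2(3n+m+k+2)$.

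For the ``if'' direction, Lemma~\ref{lem:canon2} produces canonical subsequences of total strip length at least $l$, so $k' \ge k$ by the counting above. Define the assignment by setting $x_i$ to true (resp.~false) iff both pairs of true (resp.~false) markers of $x_i$ are strips; condition~(iv) makes this well-defined. I then claim that every literal-pair strip witnesses a clause satisfied by this assignment: if $\zl{j,t}\;\zr{j,t}$ corresponds to a positive literal of $x_i$, the bracketed $\fl{i,s}$ is absent from $G'_1$, so the pair $\fl{i,s}\;\fr{i,s}$ is not a strip, and condition~(iv) forces both pairs of true markers of $x_i$ to be strips; hence $x_i$ is true and the literal satisfies $C_j$. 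The negative-literal case is symmetric. Since distinct literal-pair strips arise from distinct clauses by condition~(iii), at least $k' \ge k$ clauses are satisfied.

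The main delicate point is the canonical-decomposition check in the ``only if'' direction: one must rule out that any prescribed pair accidentally extends to a maximal string of length $\ge 3$ occurring contiguously in both $G'_1$ and $G'_2$, which would break the pair-strip structure and corrupt the count. This is prevented by the deliberate placement asymmetries of the construction---variable markers appear at the right end of $\check G_1$ but between the two halves of each $\langle x_i \rangle$ in $\check G_2$, clause markers appear on opposite sides of the dummy block, and literal markers appear inside $\langle x_i \rangle$ in $G_1$ but grouped after clause markers in $G_2$. A short case analysis of which markers can lie adjacent to each included pair in each genomic map confirms that no merging into a longer strip occurs, completing the proof.
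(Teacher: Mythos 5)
Your proof is correct and follows essentially the same route as the paper's: the explicit choice of marker pairs for the ``only if'' direction, and Lemma~\ref{lem:canon2} plus the counting $l = 6n+2m+4+2k'$ for the ``if'' direction, with your witness argument for satisfied clauses just making explicit what the paper leaves as ``easy to check.'' The only remark is that your final ``delicate point'' is moot: since a valid solution's subsequences must decompose entirely into strips, the total strip length equals the number of retained markers, so even if adjacent pairs merged into a longer maximal strip the count $2(3n+m+k+2)$ would be unaffected.
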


\begin{proof}
We first prove the ``only if'' direction.
Suppose that
the variables in $X$
have an assignment that satisfies at least $k$ clauses in $\C$.
We will show that
the two genomic maps $G_1$ and $G_2$
have two subsequences of total strip length at least $2(3n + m + k + 2)$.
For each variable $x_i$,
choose the two pairs of true markers if the variable is assigned true,
or the two pairs of false markers if the variable is assigned false.
For each satisfied clause $C_j$,
choose one pair of literal markers corresponding to a true literal
(when there are two or more true literals, choose any one).
Also choose all $m+n$ pairs of clause and variable markers
and both pairs of dummy markers.
The chosen markers induce two subsequences of the two genomic maps.
It is easy to check that, by construction,
the two subsequences have at least $3n + m + k + 2$ strips,
each strip forming a pair.
Thus the total strip length is at least $2(3n + m + k + 2)$.
We refer to Figure~\ref{fig:msr2} (c) and (d) for two examples.

We next prove the ``if'' direction.
Suppose that
the two genomic maps $G_1$ and $G_2$
have two subsequences of total strip length at least $2(3n + m + k + 2)$.
We will show that
the variables in $X$
have an assignment that satisfies at least $k$ clauses in $\C$.
By Lemma~\ref{lem:canon2},
the two genomic maps have two subsequences
of total strip length at least $2(3n + m + k + 2)$
such that each strip is a pair and, moreover,
the two pairs of dummy markers,
the $m+n$ pairs of clause and variable markers,
at most one pair of literal markers of each clause,
and either both pairs of true markers or both pairs of false markers
of each variable are strips.
Thus at least $k$ strips are pairs of literal markers,
each pair of a different clause.
Again it is easy to check that, by construction,
the assignment of the variables in $X$ to either true or false
(corresponding to the choices of
either both pairs of true markers or both pairs of false markers)
satisfies at least $k$ clauses in $\C$
(corresponding to the at least $k$ pairs of literal markers that are strips).
\end{proof}

\subsection{L-reduction from \sat{p}{q} to \msr{2}}

We present an L-reduction $(f, g, \alpha, \beta)$ from \sat{p}{q} to \msr{3}
as follows.
The function $f$,
given the \sat{p}{q} instance $(X,\C)$,
constructs the two genomic maps $G_1$ and $G_2$
as in the NP-hardness reduction.
Let $k^*$ be the maximum number of clauses in $\C$
that can be satisfied by an assignment of $X$,
and let $l^*$ be the maximum total strip length of any two subsequences of
$G_1$ and $G_2$, respectively.
Since a random assignment of each variable independently
to either true or false with equal probability $\frac12$
satisfies each disjunctive clause of $q$ literals with probability
$1-\frac1{2^q}$,
we have $k^* \ge \frac{2^q - 1}{2^q}m$.
By Lemma~\ref{lem:iff2},
we have $l^* = 2(3n + m + k^* + 2)$.
Recall that $np = mq$.
It follows that
$$
l^* = 2(3n + m + k^* + 2)
= \left( 6\,\frac{q}{p} + 2 \right) m + 2k^* + 4
\le \left(
	\left( 6\,\frac{q}{p} + 2 \right) \frac{2^q}{2^q-1} + 2 + \frac4{k^*}
\right) k^*.
$$

The function $g$,
given two subsequences of the two genomic maps
$G_1$ and $G_2$, respectively,
transforms the subsequences into canonical form
as in the proof of Lemma~\ref{lem:canon2},
then returns an assignment of $X$
corresponding to the choices of true or false markers.
Let $l$ be the total strip length of the subsequences,
and let $k$ be the number of clauses in $\C$ that are satisfied
by this assignment.
Then 
$k \ge l/2 - 3n - m - 2$.
It follows that
$$
|k^* - k| = k^* - k \le (l^*/2 - 3n - m - 2) - (l/2 - 3n - m - 2) = |l^* - l|/2.
$$

Let $\epsilon > 0$ be an arbitrary small constant.
Note that by brute force we can check
whether $k^* < 2/\epsilon$ and,
in the affirmative case,
compute an optimal assignment of $X$
that satisfies the maximum number of clauses in $\C$,
all in $m^{O(1/\epsilon)}$ time,
which is polynomial in $m$ for a constant $\epsilon$.
Therefore we can assume without loss of generality that $k^* \ge 2/\epsilon$.
Then,
with the two constants
$\alpha = ( 6\,\frac{q}{p} + 2 ) \frac{2^q}{2^q-1} + 2 + 2\epsilon$
and $\beta = 1/2$,
both properties \eqref{eq:L1} and \eqref{eq:L2} of L-reduction
are satisfied.
In particular,
for $p = 3$ and $q = 2$,
$$
\alpha\beta
= \left( 3\frac{q}{p} + 1 \right) \frac{2^q}{2^q-1} + 1 + \epsilon
= 5 + \epsilon.
$$

Berman and Karpinski~\cite{BK03} showed that
\sat{3}{2} is NP-hard to approximate within any constant less than
$\frac{464}{463} = \frac1{1-1/464}$.
Thus \msr{2} is NP-hard to approximate within any constant less than
$$
\lim_{\epsilon\to0}
\frac1{1-(1/464)/(5+\epsilon)}
= \frac1{1 - 1/2320} = \frac{2320}{2319} = 1.000431\ldots.
\qquad
$$

\section{An asymptotic lower bound for \msr{d}}
\label{sec:msrd}

In this section, we derive an asymptotic lower bound for approximating \msr{d}
by an L-reduction from \dm{d} to \msr{(d+2)}.

\subsection{NP-hardness reduction from \dm{d} to \msr{(d+2)}}

Let $E \subseteq V_1 \times \cdots \times V_d$ be a set of $n$ hyper-edges
over $d$ disjoint sets $V_i$ of vertices, $1 \le i \le d$.
We construct two genomic maps $G_{\rightarrow}$ and $G_{\leftarrow}$,
and $d$ genomic maps $G_i$, $1 \le i \le d$,
where each map is a permutation of the following $2n$ distinct markers
all in positive orientation:
\begin{itemize}

\item
$n$ pairs of edge markers $\zl i$ and $\zr i$, $1 \le i \le n$.

\end{itemize}

The two genomic maps $G_{\rightarrow}$ and $G_{\leftarrow}$
are concatenations of the $n$ pairs of edge markers
with ascending and descending indices, respectively:

$$
\begin{array}{lc}

G_{\rightarrow}: & \zl 1 \; \zr 1 \quad \cdots \quad \zl n \; \zr n \\

G_{\leftarrow}: & \zl n \; \zr n \quad \cdots \quad \zl 1 \; \zr 1

\end{array}
$$

Each genomic map $G_i$ corresponds to a vertex set
$V_i = \{ v_{i,j} \mid 1 \le j \le |V_i| \}$,
$1 \le i \le d$,
and is represented schematically as follows:
$$
G_i: \quad \cdots \quad \langle v_{i,j} \rangle \quad \cdots \quad
$$
Here each $\langle v_{i,j} \rangle$
consists of the edge markers of hyper-edges containing the vertex $v_{i,j}$,
grouped together such that
the left markers appear with ascending indices
before the right markers also with ascending indices.
This completes the construction.
We refer to Figure~\ref{fig:msrd}(a) for an example.

\begin{figure}[htb]
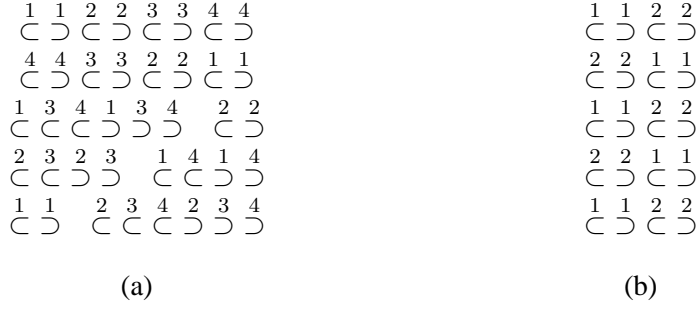

\centering

\begin{minipage}[c]{0.4\linewidth}
\centering
$$
\begin{array}{c}

\zl 1 \; \zr 1 \; \zl 2 \; \zr 2 \; \zl 3 \; \zr 3 \; \zl 4 \; \zr 4 \\

\zl 4 \; \zr 4 \; \zl 3 \; \zr 3 \; \zl 2 \; \zr 2 \; \zl 1 \; \zr 1 \\

\zl 1 \; \zl 3 \; \zl 4 \; \zr 1 \; \zr 3 \; \zr 4 \quad \zl 2 \; \zr 2 \\

\zl 2 \; \zl 3 \; \zr 2 \; \zr 3 \quad \zl 1 \; \zl 4 \; \zr 1 \; \zr 4 \\

\zl 1 \; \zr 1 \quad \zl 2 \; \zl 3 \; \zl 4 \; \zr 2 \; \zr 3 \; \zr 4

\end{array}
$$
(a)
\end{minipage}
\begin{minipage}[c]{0.4\linewidth}
\centering
$$
\begin{array}{c}

\zl 1 \; \zr 1 \; \zl 2 \; \zr 2 \\

\zl 2 \; \zr 2 \; \zl 1 \; \zr 1 \\

\zl 1 \; \zr 1 \; \zl 2 \; \zr 2 \\

\zl 2 \; \zr 2 \; \zl 1 \; \zr 1 \\

\zl 1 \; \zr 1 \; \zl 2 \; \zr 2

\end{array}
$$
(b)
\end{minipage}

\caption{\small%
\msr{5} construction for the \dm{3} instance
$V_1 = \{ v_{1,1}, v_{1,2} \}$,
$V_2 = \{ v_{2,1}, v_{2,2} \}$,
$V_3 = \{ v_{3,1}, v_{3,2} \}$,
and
$E = \{\,
e_1 = (v_{1,1}, v_{2,2}, v_{3,1}),\,
e_2 = (v_{1,2}, v_{2,1}, v_{3,2}),\,
e_3 = (v_{1,1}, v_{2,1}, v_{3,2}),\,
e_4 = (v_{1,1}, v_{2,2}, v_{3,2})
\,\}$.
(a) The five genomic maps $G_{\rightarrow}, G_{\leftarrow}, G_1, G_2, G_3$.
(b) The five subsequences of the genomic maps
corresponding to the subset $\{ e_1, e_2 \}$ of pairwise-disjoint hyper-edges.}
\label{fig:msrd}
\end{figure}

The following property of our construction is obvious:

\begin{proposition}\label{prp:msrd}
Two hyper-edges in $E$ intersect if and only if
the corresponding two pairs of edge markers intersect
in one of the $d$ genomic maps $G_i$, $1 \le i \le d$.
\end{proposition}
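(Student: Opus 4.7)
The plan is to verify both directions of the equivalence by a direct inspection of the construction, exploiting the key structural fact that in each map $G_i$ every pair of edge markers lies entirely inside a single block $\langle v_{i,j}\rangle$. This follows because each hyper-edge in $E$ contains exactly one vertex from each $V_i$, so the two markers $\zl a$ and $\zr a$ of a hyper-edge $e_a$ are placed together in the unique block corresponding to the vertex of $V_i$ that $e_a$ contains.

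For the forward direction, suppose two hyper-edges $e_a$ and $e_b$ share a vertex $v_{i,j}\in V_i$, and assume without loss of generality that $a<b$. Then all four markers $\zl a,\zr a,\zl b,\zr b$ lie inside the block $\langle v_{i,j}\rangle$ of $G_i$. By construction, that block lists all left markers first in ascending order of index and then all right markers in ascending order of index, so the induced order on these four markers is $\zl a,\zl b,\zr a,\zr b$. In particular, $\zl b$ falls between $\zl a$ and $\zr a$, so the two pairs intersect in $G_i$.

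For the backward direction, suppose the pairs $(\zl a,\zr a)$ and $(\zl b,\zr b)$ intersect in some $G_i$. Since each of the two pairs is wholly contained in a single block of $G_i$, and distinct blocks occupy disjoint contiguous positions in $G_i$, an intersection forces the two pairs to share the same block $\langle v_{i,j}\rangle$. But this means both $e_a$ and $e_b$ contain the vertex $v_{i,j}$, so the hyper-edges intersect, as required.

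The argument is essentially a syntactic check of the construction, so I do not expect a real obstacle; the only step that needs even a moment of care is confirming that each pair is confined to a single block of $G_i$, which is immediate from the fact that every hyper-edge meets each $V_i$ in exactly one vertex.
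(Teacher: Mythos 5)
Your proof is correct and is exactly the direct verification the paper has in mind: the paper states this proposition without proof, declaring it ``obvious'' from the construction, and your argument simply spells out that check (each pair of edge markers is confined to the single block $\langle v_{i,j}\rangle$ of the vertex of $V_i$ that the hyper-edge contains, and the left-markers-then-right-markers ordering within a block forces intersecting pairs there). No gap; nothing further is needed.
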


The following lemma is analogous to Lemma~\ref{lem:canon4}:

\begin{lemma}\label{lem:canond}
In any $d+2$ subsequences of the $d+2$ genomic maps
$G_{\rightarrow}, G_{\leftarrow}, G_1, \ldots, G_d$, respectively,
each strip must be a pair of edge markers.
\end{lemma}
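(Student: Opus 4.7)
The plan is to reduce the claim to a clean combinatorial property of the two ``sorted'' maps $G_{\rightarrow}$ and $G_{\leftarrow}$, exploiting the fact that all markers appear in positive orientation. First, I would observe that in this basic setting every subsequence is a positive-only string; the signed reversal of such a string is negative-only and hence cannot match any substring of any subsequence. Consequently, for a strip $S$ to satisfy the defining condition of appearing as itself or as its signed reversal contiguously in each of the $d+2$ subsequences, it must appear literally as itself --- with the same order and signs --- in every subsequence. In particular, $S$ must be a subsequence of both $G_{\rightarrow}$ and $G_{\leftarrow}$.

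Next, I would argue directly that any common subsequence of $G_{\rightarrow}$ and $G_{\leftarrow}$ of length at least two has the form $\zl i \zr i$. The two structural facts at play are: in $G_{\rightarrow}$, whenever $i<j$, the pair $\zl i \zr i$ precedes the pair $\zl j \zr j$; in $G_{\leftarrow}$, the opposite ordering holds. Thus, if $S$ contained markers of two distinct indices $i<j$, the subsequence-of-$G_{\rightarrow}$ condition would force the index-$i$ markers of $S$ to precede its index-$j$ markers, whereas the subsequence-of-$G_{\leftarrow}$ condition would force the reverse --- a contradiction. Hence all markers of $S$ share a single index $i$; since that index contributes only the two markers $\zl i$ and $\zr i$, and both maps place $\zl i$ immediately before $\zr i$, the only length-$\ge 2$ possibility is $S = \zl i \zr i$.

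The only real subtlety is interpretational: one must invoke the fact that ``signed reversal'' actually flips the orientations of the markers, so in the positive-only setting the reversed alternative becomes vacuous and ordinary reversal is no longer an acceptable form for a strip. Once this is recognized, the argument closely mirrors that of Lemma~\ref{lem:canon4}, and the $d$ maps $G_1,\ldots,G_d$ play no role in establishing the lemma.
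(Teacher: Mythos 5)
Your proposal is correct and follows essentially the same route as the paper: the paper's proof is the single observation that a strip cannot contain two edge markers of different indices because those appear in opposite orders in $G_{\rightarrow}$ and $G_{\leftarrow}$, which is exactly your core step. Your additional remarks (that the signed-reversal alternative is vacuous in the positive-orientation setting, and that a single index admits only the pair $\zl i\;\zr i$) merely make explicit what the paper leaves implicit.
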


\begin{proof}
By construction,
a strip cannot include two edge markers of different indices
because they appear in different orders
in $G_{\rightarrow}$ and in $G_{\leftarrow}$.
\end{proof}

The following lemma, analogous to
Lemma~\ref{lem:iff4}, Lemma~\ref{lem:iff3}, and Lemma~\ref{lem:iff2},
establishes the NP-hardness of \msr{d}:

\begin{lemma}\label{lem:iffd}
The set $E$ has a subset of $k$ pairwise-disjoint hyper-edges
if and only if
the $d+2$ genomic maps
$G_{\rightarrow}, G_{\leftarrow}, G_1, \ldots, G_d$
have $d+2$ subsequences
whose total strip length $l$ is at least $2k$.
\end{lemma}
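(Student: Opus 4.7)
The plan is to follow exactly the pattern of Lemma~\ref{lem:iff4}, since the construction here is the natural generalization (no dummy markers are needed because the two monotone maps $G_{\rightarrow}$ and $G_{\leftarrow}$ already force canonical form, as established by Lemma~\ref{lem:canond}).

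For the ``only if'' direction, I would start with $k$ pairwise-disjoint hyper-edges in $E$. By Proposition~\ref{prp:msrd}, the corresponding $k$ pairs of edge markers are pairwise non-intersecting in each of the $d$ genomic maps $G_1,\ldots,G_d$. Take as subsequence of each of the $d+2$ genomic maps exactly these $2k$ markers. In $G_{\rightarrow}$ and $G_{\leftarrow}$ the two markers $\zl i$ and $\zr i$ of any selected pair are already adjacent by construction. In each $G_i$, the key observation is that because the hyper-edges are pairwise disjoint, each vertex $v_{i,j}\in V_i$ belongs to at most one selected hyper-edge, so within the block $\langle v_{i,j}\rangle$ only one pair is retained, and its left and right markers appear consecutively in the subsequence. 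Thus each of the $d+2$ subsequences decomposes into exactly $k$ strips of length two, giving total strip length $2k$.

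For the ``if'' direction, suppose the $d+2$ subsequences have total strip length $l\ge 2k$. By Lemma~\ref{lem:canond}, every strip is a pair of edge markers of the same index, so there are at least $k$ distinct such pairs appearing as strips. These $k$ pairs must be pairwise non-intersecting in each $G_i$: if pair $a$ and pair $c$ intersected in some $G_i$, i.e.\ some marker of pair $c$ lay between $\zl a$ and $\zr a$ in $G_i$, then retaining both pairs in the subsequence of $G_i$ would force the two markers of pair $a$ to be non-adjacent (since subsequences preserve order), contradicting that $\{\zl a,\zr a\}$ forms a strip. Hence Proposition~\ref{prp:msrd} applies and the $k$ corresponding hyper-edges in $E$ are pairwise disjoint.

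There is essentially no obstacle here; the lemma is the clean analogue of Lemma~\ref{lem:iff4} and the only point that deserves explicit mention is the non-intersection argument in the ``if'' direction, which uses that strips must be contiguous in each subsequence while subsequences preserve the order of the parent map.
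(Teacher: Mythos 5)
Your proof is correct and follows essentially the same route as the paper's: the ``only if'' direction via Proposition~\ref{prp:msrd} inducing $k$ length-two strips, and the ``if'' direction via Lemma~\ref{lem:canond} plus Proposition~\ref{prp:msrd}. The extra details you supply (at most one retained pair per block $\langle v_{i,j}\rangle$, and the explicit non-intersection argument for strips) are points the paper leaves implicit, so this is just a slightly more spelled-out version of the same argument.
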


\begin{proof}
We first prove the ``only if'' direction.
Suppose that
the set $E$ has a subset of at least $k$ pairwise-disjoint hyper-edges.
We will show that
the $d+2$ genomic maps
$G_{\rightarrow}, G_{\leftarrow}, G_1, \ldots, G_d$
have $d+2$ subsequences of total strip length at least $2k$.
By Proposition~\ref{prp:msrd},
the $k$ pairwise-disjoint hyper-edges
correspond to $k$ pairs of edge markers that do not intersect
each other in the genomic maps.
These $k$ pairs of edge markers
induce a subsequence of length $2k$ in each genomic map.
In each subsequence,
the left marker and the right marker of each pair appear consecutively
and compose a strip.
Thus the total strip length is at least $2k$.
We refer to Figure~\ref{fig:msrd}(b) for an example.

We next prove the ``if'' direction.
Suppose that
the $d+2$ genomic maps
$G_{\rightarrow}, G_{\leftarrow}, G_1, \ldots, G_d$
have $d+2$ subsequences of total strip length at least $2k$.
We will show that
the set $E$ has a subset of at least $k$ pairwise-disjoint hyper-edges.
By Lemma~\ref{lem:canon4},
each strip of the subsequences must be a pair of edge markers.
Thus we obtain at least $k$ pairs of edge markers
that do not intersect each other in the genomic maps.
Then, by Proposition~\ref{prp:msrd},
the corresponding set of at least $k$ hyper-edges in $E$
are pairwise-disjoint.
\end{proof}

\subsection{L-reduction from \dm{d} to \msr{(d+2)}}

We present an L-reduction $(f, g, \alpha, \beta)$ from \dm{d} to \msr{(d+2)}
as follows.
The function $f$,
given a set $E \subseteq V_1 \times \cdots \times V_d$ of hyper-edges,
constructs the $d+2$ genomic maps
$G_{\rightarrow}, G_{\leftarrow}, G_1, \ldots, G_d$
as in the NP-hardness reduction.
Let $k^*$ be the maximum number of pairwise-disjoint hyper-edges in $E$,
and let $l^*$ be the maximum total strip length of any $d+2$ subsequences of
$G_{\rightarrow}, G_{\leftarrow}, G_1, \ldots, G_d$, respectively.
By Lemma~\ref{lem:iffd},
we have
$$
l^*= 2k^*.
$$
Choose $\alpha = 2$,
then property~\eqref{eq:L1} of L-reduction is satisfied.

The function $g$,
given $d+2$ subsequences of the $d+2$ genomic maps
$G_{\rightarrow}, G_{\leftarrow}, G_1, \ldots, G_d$, respectively,
returns a subset of pairwise-disjoint hyper-edges in $E$
corresponding to
the pairs of edge markers that are strips of the subsequences.
Let $l$ be the total strip length of the subsequences,
and let $k$ be the number of pairwise-disjoint hyper-edges
returned by the function $g$.
Then $k \ge l/2$.
It follows that
$$
|k^* - k| = k^* - k \le l^*/2 - l/2 = |l^* - l|/2.
$$
Choose $\beta = 1/2$,
then property~\eqref{eq:L2} of L-reduction is also satisfied.

We have obtained an L-reduction from \dm{d} to \msr{(d+2)}
with $\alpha\beta = 1$.
Hazan, Safra, and Schwartz~\cite{HSS06} showed that
\dm{d} is NP-hard to approximate within $\Omega(d/\log d)$.
It follows that \msr{d} is also NP-hard to approximate within
$\Omega(d/\log d)$.
This completes the proof of Theorem~\ref{thm:msrd}.

\section{A polynomial-time $2d$-approximation for \msr{d}}
\label{sec:2d}

In this section we prove Theorem~\ref{thm:2d}.
We briefly review the two previous algorithms~\cite{ZZS07,CFJZ09}
for this problem.
The first algorithm for \msr{2} is a simple heuristic due to
Zheng, Zhu, and Sankoff~\cite{ZZS07}:
\begin{enumerate}
\item Extract a set of pre-strips from the two genomic maps;
\item Compute an independent set of strips from the pre-strips.
\end{enumerate}
This algorithm is inefficient because the number of pre-strips could be
exponential in the sequence length, and furthermore the problem
Maximum-Weight Independent Set in general graphs is NP-hard.

Chen, Fu, Jiang, and Zhu~\cite{CFJZ09} presented
a $2d$-approximation algorithm for \msr{d}.
For any $d \ge 2$,
a \emph{$d$-interval} is the union of $d$ disjoint intervals in the real line,
and a \emph{$d$-interval graph} is the intersection graph of
a set of $d$-intervals, with a vertex for each $d$-interval,
and with an edge between two vertices
if and only the corresponding $d$-intervals overlap.
The $2d$-approximation algorithm~\cite{CFJZ09} works as follows:
\begin{enumerate}
\item Compose a set of $d$-intervals, one for each combination
of $d$ substrings of the $d$ genomic maps, respectively.
Assign each $d$-interval a weight equal to
the length of a longest common subsequence (which may be reversed and negated)
in the corresponding $d$ substrings.
\item Compute a $2d$-approximation for Maximum-Weight Independent Set
in the resulting $d$-interval graph
using Bar-Yehuda \etal's fractional local-ratio algorithm~\cite{BHN+06}.
\end{enumerate}

Let $n$ be the number of markers in each genomic map.
Then the number of $d$-intervals composed by this algorithm
is $\Theta(n^{2d})$
because each of the $d$ genomic maps has $\Theta(n^2)$ substrings.
Consequently the running time of this algorithm can be exponential
if the number $d$ of genomic maps is not a constant but is part of the input.
In the following, we show that if all markers are distinct in each genomic map
(as discussed earlier, this is a reasonable assumption in application),
then the running time of the $2d$-approximation algorithm can be improved
to polynomial for all $d \ge 2$.
This improvement is achieved by composing a smaller set of candidate
$d$-intervals in step~1 of the algorithm.

The idea is actually quite simple
and has been used many times previously~\cite{Ly04,Ji10b,BFR09}.
Note that any strip of length $l > 3$ is a concatenation of shorter
strips of lengths $2$ and $3$, for example, $4 = 2 + 2$, $5 = 2 + 3$, etc.
Since the objective is to maximize the total strip length,
it suffices to consider only short strips of lengths $2$ and $3$
in the genomic maps,
and to enumerate only candidate $d$-intervals that correspond to these strips.
When each genomic map is a signed permutation of the same $n$ distinct markers,
there are at most ${n \choose 2} + {n \choose 3} = O(n^3)$ strips
of lengths $2$ and $3$,
and for each strip there is a unique shortest substring of each genomic map
that contains all markers in the strip.
Thus we compose only $O(n^3)$ $d$-intervals,
and improve the running time of the $2d$-approximation algorithm to polynomial
for all $d \ge 2$.
This completes the proof of Theorem~\ref{thm:2d}.

\section{Inapproximability results for related problems}
\label{sec:more}

In this section we prove Theorem~\ref{thm:cmsrd} and Theorem~\ref{thm:gap}.

\paragraph{\cmsr{3} and \cmsr{4} are APX-hard.}

For any $d$, the decision problems of \msr{d} and \cmsr{d} are equivalent.
Thus the NP-hardness of $\msr{d}$ implies the NP-hardness of $\cmsr{d}$,
although the APX-hardness of $\msr{d}$ does not necessarily
imply the APX-hardness of $\cmsr{d}$.
Note that the two problems \mis{\Delta} and \mvc{\Delta} complement each other
just as the two problems \msr{d} and \cmsr{d} complement each other.
Thus our NP-hardness reduction from \mis{3} to \msr{3}
in Section~\ref{sec:msr3} can be immediately turned into
an NP-hardness reduction from \mvc{3} to \cmsr{3}.

We present an L-reduction $(f, g, \alpha, \beta)$ from \mvc 3 to \cmsr 3
as follows.
The function $f$,
given a graph $G$ of maximum degree $3$,
constructs the three genomic maps $G_0, G_1, G_2$
as in the NP-hardness reduction in Section~\ref{sec:msr3}.
Let $k^*$ be the number of vertices in a maximum independent set in $G$,
and let $l^*$ be the maximum total strip length of any three subsequences of
$G_0, G_1, G_2$, respectively.
Also let $c^*$ be the number of vertices in a minimum vertex cover in $G$,
and let $x^*$ be the minimum number of markers that must be deleted
to transform the three genomic maps $G_0, G_1, G_2$
into strip-concatenated subsequences.
Then $k^* + c^* = n$ and $l^* + x^* = 4n$.
By Lemma~\ref{lem:iff3},
we have
$l^* = 2(n + k^*)$.
It follows that
$$
x^*= 4n - l^* = 4n - 2(n + k^*) = 2(n - k^*) = 2c^*.
$$
Choose $\alpha = 2$,
then property~\eqref{eq:L1} of L-reduction is satisfied.

The function $g$,
given three subsequences of the three genomic maps $G_0, G_1, G_2$,
respectively,
transforms the subsequences into canonical form
as in the proof of Lemma~\ref{lem:canon3},
then returns a vertex cover in the graph $G$
corresponding to the deleted pairs of vertex markers.
Let $x$ be the number of deleted vertex markers,
and let $c$ be the number of vertices in the vertex cover
returned by the function $g$.
Then $c \le x/2$.
It follows that
$$
|c^* - c| = c - c^* \le x/2 - x^*/2 = |x^* - x|/2.
$$
Choose $\beta = 1/2$,
then property~\eqref{eq:L2} of L-reduction is also satisfied.

The L-reduction from \mvc{3} to \cmsr{3} can be obviously generalized:

\begin{lemma}\label{lem:mvc}
Let $\Delta \ge 3$ and $d \ge 3$.
If there is a polynomial-time algorithm for
decomposing any graph of maximum degree $\Delta$ into $d-1$ linear forests,
then there is an L-reduction from \mvc{\Delta} to \cmsr{d}
with constants $\alpha=2$ and $\beta=1/2$.
\end{lemma}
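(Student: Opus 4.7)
The plan is to generalize the L-reduction from \mvc{3} to \cmsr{3} given immediately above, replacing its two linear forests $E_1, E_2$ by the $d-1$ linear forests $E_1, \ldots, E_{d-1}$ produced by the hypothesized decomposition algorithm applied to the input graph $G$ of maximum degree $\Delta$. Let $n$ be the number of vertices of $G$. The function $f$ introduces one pair of vertex markers $\zl i \zr i$ and one pair of dummy markers $\xl i \xr i$ per vertex, and constructs $d$ genomic maps $G_0, G_1, \ldots, G_{d-1}$. The map $G_0$ is the interleaved permutation $\zl 1 \zr 1 \xl 1 \xr 1 \cdots \zl n \zr n \xl n \xr n$ exactly as in Section~\ref{sec:msr3}. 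For each $i \in \{1, \ldots, d-1\}$, the map $G_i$ takes either the schematic form $\langle V_i \rangle \langle E_i \rangle \langle D \rangle$ or its mirror $\langle D \rangle \langle E_i \rangle \langle V_i \rangle$, with at least one map of each kind so that some pair of maps disagrees on the relative placement of dummy markers versus vertex markers. Here $\langle E_i \rangle$, $\langle V_i \rangle$, and $\langle D \rangle$ have the same meaning as in Section~\ref{sec:msr3}.

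Next I would establish direct analogues of Proposition~\ref{prp:msr4} and Lemma~\ref{lem:canon3}. The adjacency statement carries over verbatim: two vertices are adjacent in $G$ if and only if their vertex-marker pairs intersect in one of $G_1, \ldots, G_{d-1}$, because $E_1 \cup \cdots \cup E_{d-1}$ covers every edge of $G$ and the interleaving pattern inside each $\langle E_i \rangle$ produces exactly the intersections corresponding to adjacencies along the paths of $E_i$. A canonical form lemma then follows from the same cut-delete-insert operation used in the proof of Lemma~\ref{lem:canon3}; the only properties of the construction it relies on are that $G_0$ disagrees with every $G_i$ on the order of the dummy markers (ruling out long dummy-only strips) and that at least one pair of maps disagrees on the relative placement of dummy versus vertex markers (ruling out mixed strips), both of which are built into the construction. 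The conclusion is that in canonical form every strip is a pair, all $n$ dummy pairs appear as strips, and the remaining strips are vertex-marker pairs that pairwise do not intersect in any $G_i$, hence correspond to an independent set of $G$. From this, letting $c^*$ denote the size of a minimum vertex cover of $G$ and $x^*$ the minimum number of markers that must be deleted in an optimal \cmsr{d} solution, one deduces $x^* = 2 c^*$, establishing property~\eqref{eq:L1} with $\alpha = 2$. The function $g$ transforms its input subsequences into canonical form and returns the vertex set of the deleted vertex-marker pairs; since all $n$ dummy pairs are retained in canonical form, the number $x$ of deleted markers satisfies $x = 2c$ where $c$ is the size of the returned cover, so $|c - c^*| = (x - x^*)/2$, giving property~\eqref{eq:L2} with $\beta = 1/2$.

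The main obstacle is verifying that the canonical form lemma still goes through with $d-1$ maps encoding the linear forests rather than just the two maps $G_1, G_2$ of Section~\ref{sec:msr3}. One must check that the cut-delete-insert operation performed to fix a violation in one pair of maps does not introduce new violations in the other maps. Since every inserted strip is a pair of dummy markers and the dummy pairs appear in a common reversed order in every $G_i$ with $i \ge 1$, the insertion is locally consistent across all the $G_i$'s; similarly, any strip that mixes dummy and vertex markers, or two dummies of different indices, is forbidden by the same ordering disagreements that ruled out such strips in the $d = 3$ case. The proof therefore carries through to yield the claimed L-reduction with constants $\alpha = 2$ and $\beta = 1/2$.
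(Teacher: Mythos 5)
Your proposal is correct and matches the paper's intended argument: the paper states this lemma as the ``obvious generalization'' of the explicit L-reduction from \mvc{3} to \cmsr{3}, namely keeping $G_0$ and replacing the two maps $G_1,G_2$ by $d-1$ maps, one per linear forest, oriented so that the dummy block $\langle D\rangle$ is on opposite sides in at least two of them. You correctly identify and discharge the only point the paper leaves implicit --- that the cut-delete-insert canonicalization still works because inserted dummy pairs are contiguous in every map and the required ordering disagreements (dummy vs.\ dummy via $G_0$, dummy vs.\ vertex via the two opposite orientations) persist for all $d-1$ forest maps.
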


Recall that there exist polynomial-time algorithms for
decomposing a graph of maximum degree $3$ and $4$
into at most $2$ and $3$ linear forests,
respectively~\cite{AEH80,AC81,AEH81}.
Thus we have an L-reduction from \mvc{3} to \cmsr{3} and
an L-reduction from \mvc{4} to \cmsr{4},
with the same parameters $\alpha=2$, $\beta=1/2$, and $\alpha\beta = 1$.
Chleb\'ik and Chleb\'ikov\'a~\cite{CC06} showed that
\mvc{3} and \mvc{4}
are NP-hard to approximate within
$1.0101215$
and
$1.0202429$,
respectively.
It follows that
\cmsr{3} and \cmsr{4}
are NP-hard to approximate within
$1.0101215$
and
$1.0202429$,
respectively, too.
The lower bound for \cmsr{4} extends to \cmsr{d} for all $d \ge 4$.
Note that we could use an L-reduction from \mvc{3} to \cmsr{4}
similar to the L-reduction from \mis{3} to \msr{4} in Section~\ref{sec:msr4},
but that only gives us a weaker lower bound of $1.0101215$ for \cmsr{4}.

\paragraph{\cmsr{2} is APX-hard.}

Let $p = 3$ and $q \ge 2$.
We present an L-reduction $(f, g, \alpha, \beta)$ from \sat{p}{q} to \cmsr{2}
as follows.
The function $f$,
given the \sat{p}{q} instance $(X,\C)$,
constructs the two genomic maps $G_1$ and $G_2$
as in our NP-hardness reduction in Section~\ref{sec:msr2}.
As before,
let $k^*$ be the maximum number of clauses in $\C$
that can be satisfied by an assignment of $X$,
and let $l^*$ be the maximum total strip length of any two subsequences of
$G_1$ and $G_2$, respectively.
Also let $x^*$ be the minimum number of deleted markers.
Then $l^* + x^*$ is exactly the number of markers in each genomic map,
that is, $2(5n + m + qm + 2)$.
By Lemma~\ref{lem:iff2},
we have $l^* = 2(3n + m + k^* + 2)$.
Thus $x^* = 2(5n + m + qm + 2) - 2(3n + m + k^* + 2) = 2(2n + qm - k^*)$.
Since a random assignment of each variable independently
to either true or false with equal probability $\frac12$
satisfies each disjunctive clause of $q$ literals with probability
$1-\frac1{2^q}$,
we have $k^* \ge \frac{2^q - 1}{2^q}m$.
Recall that $np = mq$.
It follows that
$$
x^* = 2(2n + qm - k^*)
= 2\left( 2\,\frac{q}{p} + q \right) m - 2k^*
\le \left(
	2\left( 2\,\frac{q}{p} + q \right) \frac{2^q}{2^q-1} - 2
\right) k^*.
$$
For $p = 3$ and $q = 2$, we can choose
$\alpha = 2( 2\,\frac{q}{p} + q ) \frac{2^q}{2^q-1} - 2 = 62/9$.
Then property~\eqref{eq:L1} of L-reduction is satisfied.

The function $g$,
given two subsequences of the two genomic maps $G_1$ and $G_2$,
transforms the subsequences into canonical form
as in the proof of Lemma~\ref{lem:canon2},
then returns an assignment of $X$
corresponding to the choices of true or false markers.
Let $l$ be the total strip length of the subsequences,
and let $x$ be the number of deleted markers.
Let $k$ be the number of clauses in $\C$ that are satisfied
by this assignment.
Then
$$
|k^* - k| \le |l^* - l|/2 = |x^* - x|/2.
$$
Choose $\beta = 1/2$.
then property~\eqref{eq:L2} of L-reduction is satisfied.

Berman and Karpinski~\cite{BK03} showed that
\sat{3}{2} is NP-hard to approximate within any constant less than
$\frac{464}{463} = \frac1{1-1/464}$.
Since $\alpha\beta = 31/9$,
\cmsr{2} is NP-hard to approximate within any constant less than
$$
1+(1/464)/(31/9) = 1 + 9/14384 = 1.000625\ldots.
\qquad
$$

\paragraph{An asymptotic lower bound for \cmsr{d} and a lower bound for \cmsr{d} with unbounded $d$.}

Chleb\'ik and Chleb\'ikov\'a~\cite{CC06} showed that for any $\Delta \ge 228$,
\mvc{\Delta} is NP-hard to approximate within
$\frac{7}{6} - O(\log \Delta /\Delta)$.
By the second inequality in~\eqref{eq:f},
it follows that if $\Delta \le 227$,
then $f(\Delta) \le \lceil 3\lceil 227/2 \rceil/2 \rceil = 171$.
Consequently, if $f(\Delta) \ge 172$, then $\Delta \ge 228$.
By Lemma~\ref{lem:mvc}, there is an L-reduction
from \mvc{\Delta} to \cmsr{(f(\Delta)+1)} with $\alpha = 2$ and $\beta = 1/2$.
Therefore,
for any $d \ge 173$,
\cmsr{d} is NP-hard to approximate within $\frac{7}{6}- O(\log d /d)$.

The maximum degree $\Delta$ of a graph of $n$ vertices is at most $n-1$.
Again by the second inequality in~\eqref{eq:f},
we have
$f(\Delta) \le \lceil 3\lceil (n-1)/2 \rceil/2 \rceil$.
Thus $f(\Delta)$ is bounded by a polynomial in $n$.
If $d$ is not a constant but is part of the input,
then a straightforward generalization of
the L-reduction from \mvc{3} to \cmsr{3} as in Lemma~\ref{lem:mvc}
gives an L-reduction from Minimum Vertex Cover to \cmsr{(f(\Delta)+1)}
with $\alpha = 2$ and $\beta = 1/2$.
Dinur and Safra~\cite{DS05} showed that Minimum Vertex Cover
is NP-hard to approximate within any constant less than
$10\sqrt5 - 21 = 1.3606\ldots$.
It follows that if $d$ is not a constant but is part of the input,
then \cmsr{d} is NP-hard to approximate within any constant less than
$10\sqrt5 - 21 = 1.3606\ldots$.
This completes the proof of Theorem~\ref{thm:cmsrd}.

\paragraph{Inapproximability of \gapmsr{\delta}{d} and \gapcmsr{\delta}{d}.}

It is easy to check that
all instances of \msr{d} and \cmsr{d} in our constructions for
Theorem~\ref{thm:msrd} and Theorem~\ref{thm:cmsrd}
admit optimal solutions in canonical form with maximum gap $2$,
except for the following two cases:
\begin{enumerate}

\item
In the L-reduction from \sat{p}{q} to \msr{2} and \cmsr{2},
a strip that is a pair of literal markers has a gap of $q - 1$,
which is larger than $2$ for $q \ge 4$.

\item
In the L-reduction from \dm{d} to \msr{(d+2)},
a strip that is a pair of edge markers may have an arbitrarily large gap
if it corresponds to one of many hyper-edges that share a single vertex.

\end{enumerate}

To extend our results in Theorem~\ref{thm:msrd} and Theorem~\ref{thm:cmsrd}
to the corresponding results in Theorem~\ref{thm:gap},
the first case does not matter because
we set the parameter $q$ to $2$
when deriving the lower bounds for \msr{2} and \cmsr{2}
from the lower bound for \sat{3}{2}.

The second case is more problematic, and we have to use a different L-reduction
to obtain a slightly weaker asymptotic lower bound for \gapmsr{\delta}{d}.
Trevisan~\cite{Tr01} showed that
\mis{\Delta} is NP-hard to approximate within
$\Delta/2^{O(\sqrt{\log \Delta})}$.
By Lemma~\ref{lem:mis},
there is an L-reduction from \mis{\Delta} to \gapmsr{\delta}{(f(\Delta)+2)}
with $\alpha\beta = 1$.
By the two inequalities in~\eqref{eq:f},
we have $f(\Delta) + 2 = \Theta(\Delta)$.
Thus \gapmsr{\delta}{d} is NP-hard to approximate within
$d/2^{O(\sqrt{\log d})}$.
This completes the proof of Theorem~\ref{thm:gap}.

\section{Concluding remarks}
\label{sec:remarks}

A strip of length $l$ has $l - 1$ \emph{adjacencies}
between consecutive markers.
In general, $k$ strips of total length $l$ have $l-k$ adjacencies.
Besides the total strip length,
the total number of adjacencies in the strips is also
a natural objective function of \msr{d}~\cite{CFJZ09}.
It can be checked that our L-reductions for \msr{d} and \gapmsr{\delta}{d}
still work
even if the objective function is changed from the total strip length
to the total number of adjacencies in the strips.
The only effect of this change is that the constant $\alpha$ is halved
and correspondingly the constant $\beta$ is doubled (from $1/2$ to $1$).
Since the product $\alpha\beta$ is unaffected,
Theorem~\ref{thm:msrd} and the second part of Theorem~\ref{thm:gap}
remain valid.
For Theorem~\ref{thm:2d}, we can adapt the $2d$-approximation algorithm
for maximizing the total strip length to
a $(2d+\epsilon)$-approximation algorithm for
maximizing the total number of adjacencies in strips,
for any constant $\epsilon > 0$.
The only change in the algorithm is to enumerate all $d$-intervals
of strip lengths at most $\Theta(1/\epsilon)$,
instead of $2$ and $3$.
We note that the small difference between the two objective functions,
total length versus total number of adjacencies,
has led to difference in the complexities
of two other bioinformatics problems~\cite{Ly04,Ji10b}:
For RNA secondary structure prediction,
the problem
Maximum Stacking Base Pairs (MSBP)
maximizes the total length of helices,
and the problem
Maximum Base Pair Stackings (MBPS)
maximizes the total number of adjacencies in helices.
On implicit input of base pairs determined by pair types,
MSBP is polynomially solvable,
but
MBPS is NP-hard and admits a polynomial-time approximation scheme~\cite{Ly04};
on explicit input of base pairs,
MSBP and MBPS are both NP-hard,
and admit constant approximations with factors $5/2$ and $8/3$,
respectively~\cite{Ji10b}.

In our Theorem~\ref{thm:msrd} and Theorem~\ref{thm:cmsrd}, we have chosen
to display explicit lower bounds for \msr{2} and \cmsr{2}, despite the fact
that they are rather small and unimpressive.
As commented by M. Karpinski after the author's ISAAC presentation,
it may be possible to improve the lower bound for \msr{2}
by an L-reduction from another problem.
For example, Berman and Karpinski~\cite{BK03} proved that
\sat{3}{2} is APX-hard to approximate within any constant less than
$\frac{464}{463}$ by an L-reduction from
E$d$-Occ-E$k$-LIN-$2$,
and proved that
E$d$-Occ-E$k$-LIN-$2$
is NP-hard to approximate within some other constant by an L-reduction
from yet another problem, and so on.
By constructing an L-reduction directly from
E$d$-Occ-E$k$-LIN-$2$ to \msr{2}, say, we might obtain a better lower bound.
We were not engaged in such pursuits in this paper.
Since satisfiability problems are well-known,
we chose an L-reduction from \sat{3}{2} to \msr{2}
for the sake of a gentle presentation,
and we made no effort in optimizing the constants.

We proved Theorem~\ref{thm:gap} by
extending our proofs of Theorem~\ref{thm:msrd} and Theorem~\ref{thm:cmsrd}
with minimal modifications.
We note that the $\delta$-gap constraint actually makes it easier
to prove the APX-hardness of \gapmsr{\delta}{d} and \gapcmsr{\delta}{d}
than to prove the APX-hardness of \msr{d} and \cmsr{d}.
For example,
our \sat{3}{2} constructions for \msr{2} and \cmsr{2}
can be much simplified to obtain better approximation lower bounds
for \gapmsr{\delta}{d} and \gapcmsr{\delta}{d}.
We omit the details and refer to~\cite{BFR09} for more results
on these restricted variants.
On the other hand, the correctness of our reductions does require
gaps of at least $2$ markers.
Thus our proofs do not imply the APX-hardness of
\gapmsr{1}{d} or \gapcmsr{1}{d}.
Consistent with our results,
Bulteau, Fertin, and Rusu~\cite{BFR09} proved that
\gapmsr{\delta}{2} is APX-hard for all $\delta \ge 2$
and is NP-hard for $\delta = 1$.

A curious concept called \emph{paired approximation}
was recently introduced by Eppstein~\cite{Ep10}.
For certain problems on the same input, say Clique and Independent Set
on the same graph,
sometimes we would be happy to find a good approximation to either one,
if not both.
Inapproximability results for pairs of problems are often \emph{incompatible}:
the hard instances for one problem are disjoint from the hard instances
for the other problem.
As a result,
an approximation algorithm may find a solution to one or the other
of two problems on the same input that is better than the known
inapproximablity bounds for either individual problem.
Note that our inapproximability results for \msr{2} and \cmsr{2} are compatible
because they are obtained from the same reduction from \sat{3}{2}.
Thus even as a paired approximation problem,
(\msr{2}, \cmsr{2}) is still APX-hard.
This is the first inapproximability result
for a paired approximation problem in bioinformatics.

\paragraph{Postscript.}
The APX hardness results for \msr{2} and \msr{3} in Theorem~\ref{thm:msrd}
was obtained in December 2008.
The author was later informed by Binhai Zhu in January 2009
that Lusheng Wang and he
had independently and almost simultaneously proved a weaker result
that \msr{2} is NP-hard~\cite{WZ09}.

\end{document}